\newcommand{\stb}{\mathop{\mathrm{stab}}\nolimits}
\newcommand{\enc}{\mathop{\mathrm{enc}}}
\newcommand{\fix}{\mathop{\mathrm{fix}}}
\newcommand{\orb}{\mathop{\mathrm{type}}}
\newtheorem{prop}{Lemma}
\newtheorem{prop2}[prop]{Proposition}
\newtheorem{def2}[prop]{Definition}
\pgfplotsset {compat=1.8}
\begin{document}

\title{Orbits of One-Dimensional Cellular Automata Induced by Symmetry Transformations}

\author{Martin Schaller}
\email{martin.roman.schaller@gmail.com}

\affiliation{Vienna, Austria}

\author{Karl Svozil}
\email{karl.svozil@tuwien.ac.at}
\homepage{http://tph.tuwien.ac.at/~svozil}

\affiliation{Institute for Theoretical Physics,
TU Wien,
Wiedner Hauptstrasse 8-10/136,
1040 Vienna,  Austria}

\date{\today}

\begin{abstract}
Using a group-theoretic approach, a method for determining the equivalence classes
(also called orbits) of the set of rules of one-dimensional cellular automata induced by
the symmetry operations of reflection and permutation and their product is presented.
Orbits are classified by their isomorphism type.
Results for the number of orbits and the number of orbits by type for state sets
of size two and three are included.
\end{abstract}

\keywords{one-dimensional cellular automata, symmetry transformations, G-isomorphism}

\maketitle
\newpage

\section{Introduction}

\subsection{The Physical Relevance of Cellular Automata}

Cellular Automata (CAs) are mathematical constructs that model systems composed of discrete components evolving over time according to simple local rules.
Despite their simplicity, CAs exhibit remarkable phenomenological complexity, making them powerful tools for studying a wide range of natural and computational phenomena.
 Beyond their abstract utility, CAs hold profound potential physical relevance as models for discrete universes and simulations,
 offering insights into the fundamental principles underlying locally governed (though not necessarily spatially localized) universes and the dynamics of complex systems.
Their capacity for Church-Turing universal computation, including the self-reproduction of universal devices within their framework, provides metaphors that may extend to continuous physical models.

Historically, Konrad Zuse, in his seminal work \textit{Rechnender Raum} (Calculating Space)~\cite{zuse-69,zuse-70,zuse-94},
proposed the bold hypothesis that the universe itself could be interpreted as a vast computational structure evolving through local updates.
Zuse's digital physics posits that space, time, and matter are inherently discrete, with their evolution governed by computational rules analogous to those of CAs.
This perspective suggests that physical laws are emergent properties of an underlying computational substrate, where local interactions among discrete elements produce global patterns.
In this context, CAs serve as ideal candidates for modelling a  digital universe,  offering a conceptual framework for exploring the computational essence of reality.

Independently, John von Neumann, utilizing their algorithmic and computational aspects,  provided another profound perspective on the significance of CAs,
emphasizing their universality and self-replication capabilities~\cite{v-neumann-66}.
Motivated by questions of biological self-reproduction and universal computation, von Neumann designed a CA capable of replicating itself.
This achievement demonstrated that even within a simple, rule-based system, it is possible to encode the complexity of life-like processes and achieve computational universality.
Von Neumann's work laid the foundation for studying self-reproducing systems, influencing fields ranging from artificial life to nanotechnology,
and underscoring the potential of CAs to model the interplay between computation and dynamics.

CAs have also found extensive applications as models of dynamical systems.
Their discrete, rule-driven structure makes them particularly suited for simulating phenomena where local interactions give rise to emergent behaviour,
such as fluid dynamics, traffic flow, biological growth, and even aspects of quantum mechanics.
Unlike continuum-based models requiring analysis, CAs inherently capture the often granular, stepwise nature of many physical processes.

In sum, CAs embody a profound duality, functioning both as abstract computational models and as physically relevant systems.
From Zuse's vision of a computational universe to von Neumann's pioneering work on self-reproduction, CAs have reshaped our understanding of computation, biology, and the dynamics of physical systems.
This interplay between simplicity and complexity places CAs at the heart of efforts to unify computational theory with the physical world.

Their characterization in terms of equivalence through transformations such as reflection, state permutation, and their combinations represents a critical step toward understanding their potential.
This study is motivated by such physical motivations.

\subsection{Outline and Previous Work}

A one-dimensional CA operates on a bi-infinite lattice of cells where
each cell is in one state from a finite set of possible states.
A computational step of the automaton comprises the following operations.
For each cell the automaton reads the states of a small set of neighbouring cells including
the cell itself.
The values of the states read are used as input of a lookup table, called the local rule,
that determines the new state of the cell.
Then all cells are updated synchronously.
The net effect of one computational step is the calculation of a new bi-infinite sequence of states.

Multiple iterative computational steps of the CA leads to a sequence of configurations, termed the
evolution of the CA.
If each configuration of a CA's evolution is shifted the same number of cells to the left or to
the right, the CA's evolution is still governed by the same local rule.
This fundamental property of CA is called shift invariance.

Other symmetry operations transform the local rule.
If the CA's evolution is reflected (or mirrored),
the resulting evolution is governed by the reflected rule, which, in general,
is different, to the unmirrored one.
Reflection is thus a symmetry operation that transforms rules.

Similarly, since the states of a CA are merely labels,
permutating the labels does not change the dynamic behaviour of the CA, but will in general
result in a different local rule.
Rules that can be transformed into each other under reflection or permutation or
their product
are considered equivalent.
Consequently, the set of all CA rules splits up into classes of equivalent rules.

Wolfram \cite{Wolfram1983} designated the family of one-dimensional CAs
with two states and three neighbours as elementary.
In \cite{wolfram-1986}, pp.~485-557, he gave a table that divided the 256 rules of the
elementary CA into 88 equivalence classes with respect to the symmetry operations
of reflecting the lattice, permutation of the state set, and the product of these operations.
A mathematical derivation of this result
was carried out by Li and Packard \cite{Li1990TheSO}.
Cattaneo et al.~\cite{CATTANEO19971593} studied a variety of transformations of the set of local rules,
in particular, also the symmetry
operations of two-state CAs to be discussed in this work.
They gave, inter alias, the general result for the equivalence classes of two-state CAs with
$2r+1$ neighbours, where $r$ is a nonnegative integer.
The properties of symmetry transformations acting on CA rules have also been investigated, see
e.g., \cite{CASTILLORAMIREZ2020104533}.
Symmetry transformations were even extended to generalized CAs over groups,
see~\cite{Castillo-Ramirez03072023}.

Determining the equivalence classes is an elementary classification and serves both to understand
the set of local rules in terms of symmetry operations and
to reduce the number of non-equivalent rules.
There are a variety of other classification schemes.
For instance, Wolfram's classification~\cite{WOLFRAM19841} is based on the phenomenological
behaviour of the dynamic evolution,
the Culik-Yu classification \cite{10.5555/45269.45271} captures the computational complexity of the
limit sets;
see \cite{VISPOEL2022133074} for an overview.
The classification by symmetry operations precedes these higher-level classifications
as rules in an equivalence class are  all in the same class of other
classification schemes (at least they should be).

This study focuses on the equivalence classes of one-dimensional CAs induced by the symmetry transformations
of reflection and permutation and their product.
The set of symmetry transformations forms a group which acts on the set of CA rules.
Therefore group-theoretical concepts are applied to determine the equivalence classes of CAs.
In group-theoretical notation, equivalence classes induced by group actions are called orbits, and this
term is used in the following.
One of the main results of this study is the provision of formulas that give the number of orbits
for a state set of size two and three for any size of the neighbourhood.
This study goes beyond the scope of previous work by classifying the orbits according to their isomorphism type and
deriving the cardinalities of these classes.

The organisation of this study is as follows.
Section~\ref{definitions} provides definitions on CAs and symmetry operators.
Section \ref{sec:pre} presents a method based on group actions to determine the number of orbits.
Section~\ref{sec:two_states} and \ref{sec:three_states} contains the calculations for a state set of size two
and size three respectively.
Section~\ref{validation} presents a brute-force algorithm that can be used to validate the results for
small numbers of states and neighbours.
Conclusion remarks are made in Section~\ref{sec:summary}.

\section{Definitions}
\label{definitions}

\subsection{One-dimensional Cellular Automata}

The states of a CA are represented by symbols from a finite set, also called an alphabet.
As the symbols only serve to designate the states, any finite set will do, so we choose
the set $\Sigma=\{0,1,\ldots,k-1\}$  to represent a state set of size $k$.
The size (or cardinality) of an arbitrary set $A$ is denoted by $|A|$.

A word $w=x_0x_1\ldots x_{m-1}$ over an alphabet $\Sigma$ is a finite sequence of symbols
from $\Sigma$ juxtaposed.
The length of a word $w$, denoted $|w|$, is the length of the sequence,
that is $|x_0x_1\ldots x_{m-1}| = m$
(the notation $|.|$ denotes both the size of a set and the length of a word).
The set of all words of length $m$ over the alphabet $\Sigma$ is denoted by $\Sigma^m$.
A configuration $x$ is a bi-infinite sequence  over
the alphabet $\Sigma$, defined as a mapping of $\mathbb{Z}$ into $\Sigma$.
The $i$-th element, $i \in \mathbb{Z}$, of a configuration $x$ is denoted by $x_i$.

\begin{def2}
\label{def:CA}
A one-dimensional CA is a triple $(k, N, f)$, where \\
$k \geq 2$ is an integer, the number of states in the state set $\Sigma=\{0,1,\ldots,k-1\}$; \\
$N$ is the neighbourhood, a finite nonempty set of integers such that $-N = N + d$ for an integer $d$; \\
$f$ is the local rule, a function from $\Sigma^n$ to $\Sigma$. \\
Let $n = |N|$ and $N=\{j_0,j_1, \ldots,j_{n-1}\}$ such that $j_0 < j_1 < \cdots < j_{n-1}$.
The local mapping $f$ induces the global mapping on the set of configurations
$\Phi_f^N : \Sigma^\mathbb{Z} \rightarrow \Sigma^\mathbb{Z}$, defined by
$\Phi_f^N(x)_{i} = f(x_{i+j_0}x_{i+j_{1}}\ldots x_{i+j_{n-1}})$.
\end{def2}
We have used the notation $-N = \{-j \ | \ j \in N\}$ and $N + d = \{j + d \ | \ j \in N \}$.
If $N$ is given, we will write $\Phi_f$ instead of $\Phi_f^N$.
Def.~\ref{def:CA} is similar to the one used in \cite{castilloramirez2023},
 \cite{CULIK1990357} or \cite{KARI20053},
apart that we always use the first nonnegative integers as state set and more important
that we introduce the constraint $-N = N + d$ to later define the reflection operator in
a meaningful way. If the CA is initialised with the configuration $x$, the CA computes in one step
the configuration $\Phi_f(x)$.

The shift operator $\sigma$ operates on the set of configurations,
it shifts a configuration one cell to the left, formally defined by
$\sigma(x)_i = x_{i+1}$.
By the definition of the CA, the global mapping commutes with the shift operator:
$\Phi_f(\sigma(c)) = \sigma(\Phi_f(c))$.
A fundamental result of Hedlund \cite{DBLP:journals/mst/Hedlund69} shows that an alternative,
 topological definition
of an one-dimensional CA based on the shift operator and continuous mappings is equivalent to the one above.
We contrast Def.~\ref{def:CA} with another definition that is frequently found in literature,
e.g. \cite{JEN19903} or \cite{WOLFRAM19841}.
If $p$ and $q$ are  integers, let $[p,q]$ denote the integer interval $\{p, p+1, \ldots, q \}$.
\begin{def2}
\label{def:CA-r}
A radius-based CA is a CA $(k,N,f)$ such that $N = [-r,r]$, where
$r$ is a nonnegative integer, called the radius of the CA.
\end{def2}
The local mapping $f$ induces the global mapping $\Phi_f(x)_{i} = f(x_{i-r}x_{i-r+1}\ldots x_{i+r})$.
Def.~\ref{def:CA-r} encompasses only CAs with an odd ($2r+1$) number of neighbours.
The generalisation to an even number of neighbours becomes cumbersome, e.g. by shifting
the output configuration a half cell and introducing half-integers to index the configuration,
see Kari~\cite{KARI20053},  or by loss of symmetry, see Ruivo~\cite{RUIVO2018280},
while Def.~\ref{def:CA} enables a uniform treatment of all neighbourhood sizes.

If $k = |\Sigma|$, then the set $L(k,n) = \{f \, | \, f : \Sigma^n \to \Sigma \}$ is called the local rule space
of the family of CAs with $k$ states and $n$ neighbours.
The size of $L(k,n)$ is $k^{k^n}$.
If $k > 1$, $|L(k,n)|$ grows extremely fast as function of $n$:
$|L(k,0)| = k$, and $|L(k,n+1)| = |L(k,n)|^k$.
The set $G(k, N) = \{ \Phi_f \, | \, f \in L(k, |N|) \}$ is called the global rule
space of the family of CAs with $k$ states and neighbourhood $N$.
\begin{prop2}
If $G(k_1,N_1) = G(k_2,N_2)$, then $k_1 = k_2$ and $N_1 = N_2$.
\end{prop2}
\begin{proof}
If $k_1 \ne k_2$, then clearly $G(k_1,N_1) \ne G(k_2,N_2)$.
Suppose now that $k_1 = k_2$ and $N_1 \ne N_2$.
Then $N_1 \setminus N_2 \cup N_2 \setminus N_1$ is not empty.
Without loss of generality, suppose $N_1 = \{j_0, \ldots, j_{n-1}\}$
and $j_p \in N_1 \setminus N_2$.
Define a local rule $f$ by $f(a_0 \ldots a_{n-1})  = 1$ only if $a_p = 1$ and $a_i = 0$ for $i \ne p$,
and a configuration $x$ by $x_{j_p} = 1$ and $x_i = 0$ if $i \ne j_p$.
Then $\Phi_f^{N_1}(x)_i$ is $1$ if $i=0$ and $0$ otherwise.
Let $g \in L(k,n)$ be arbitrary.
If $\Phi_g^{N_2}(x)_0 = 0$ then $\Phi_f^{N_1} \ne \Phi_g^{N_2}$.
If $\Phi_g^{N_2}(x)_0 = 1$, we conclude that $g(0 \ldots 0) = 1$ and $\Phi_g^{N_2}(x)$ contains
infinitely many 1's, so also $\Phi_f^{N_1} \ne \Phi_g^{N_2}$.
Thus, we have shown that $\Phi_f^{N_1} \not\in G(k, N_2)$.
\end{proof}
Note the following two cases.
First, if $N_2 = N_1 + q$ for an integer $q$, then
$G(k, N_1) = \{ \sigma^q \Phi_f \, | \, \Phi_f \in G(k, N_2) \}$.
Second, if $N_2 \subseteq N_1$,
then $G(k, N_2) \subseteq G(k, N_1)$.

\subsection{Symmetry operations}
\label{sec:sym-operations}
The notion of the equivalence of one-dimensional CAs is based on two classes of symmetry operations:
permutations of the state set and reflection of the configuration.

Let $S_k$ be the symmetric group of degree $k$,
that is the set of all permutations of the set $\Sigma = \{0,1, \ldots, k-1\}$,
and suppose $\alpha \in S_k$.
If $a \in \Sigma$ we write the image of $a$ under $\alpha$ as product $\alpha a$.
The extension of $\alpha$ to words and configurations is defined by elementwise application.
If $w = a_0\ldots a_{n-1} \in \Sigma^n$ is a word,
set $\alpha w  = (\alpha a_0) \ldots (\alpha a_{n-1})$.
If $x$ is a configuration, set $(\alpha x)_i = \alpha (x_i)$.
Suppose $f$ is a local rule that maps $\Sigma^n$ to $\Sigma$.
The permutation operator $\hat{\alpha}$ is
defined by
$\hat{\alpha}f(w) = \alpha f ( \alpha^{-1} w)$ for all words $w \in \Sigma^n$.
It represents a transformation
of the set of local rules.
Note that the  ``hat'' on the operator is necessary, because $\alpha f$ and
$\hat{\alpha} f$ are distinct entities.
The first one is the composite function $\alpha \circ f$, whereas the second represents
the composite function $\alpha \circ f \circ \alpha^{-1}$.
If $\Phi_f$ is the induced global mapping of $f$, we define
$\hat{\alpha} \Phi_f$ similarly:
$\hat{\alpha} \Phi_f (x) = \alpha \Phi_f (\alpha^{-1} x)$ for all configurations $x$.
From
\begin{eqnarray*}
\left( \hat{\alpha} \Phi_f(x) \right)_{i}
& = & \left( \alpha \Phi_f (\alpha^{-1} x) \right)_{i}
= \alpha f \left(\alpha^{-1}(x_{i+j_0} \ldots x_{i+j_{n-1}}) \right) \\
& = & \hat{\alpha} f(x_{i+j_0} \ldots x_{i+j_{n-1}})
 =  \Phi_{\hat{\alpha} f} (x)_{i}
\end{eqnarray*}
follows $\hat{\alpha} \Phi_f =  \Phi_{\hat{\alpha} f}$.

The second type of operator is the reflection operator.
If $w = a_0\ldots a_{n-1} \in \Sigma^n$ is a word over $\Sigma$,
define $rw  = a_{n-1} \ldots a_0$.
Note that $ra = a$ for all $a \in \Sigma$.
If $x$ is a configuration, set
$(rx)_i = x_{-i}$.
The reflection operator $\hat{r}$ is defined by
$\hat{r} f(w) = f(rw)$ and
$\hat{r} \Phi_f(x) = r \Phi_f(r x)$.
Since $r$ is self-inverse, that is $r^{-1} = r$,
we can also write
$\hat{r} f(w) = rf(r^{-1}w)$,
making the notation consistent with the one of the permutation operator.

From
\begin{eqnarray*}
\left( \hat{r} \Phi_f (x) \right)_{i} & = &
\left( r \Phi_f(rx) \right)_{i} =
\left( \Phi_f(rx) \right)_{-i} = f\left( (rx)_{-i +j_0} \ldots (rx)_{-i+j_{n-1}} \right) \\
&  = & f\left( x_{i  - j_0} \ldots x_{i-j_{n-1}} \right) = f\left( x_{i  + j_{n-1} + d} \ldots x_{i + j_{0} +d} \right) \\
&  = & f\left( r(x_{i  + j_0 + d} \ldots x_{i + j_{n-1} +d}) \right)
= \hat{r}f\left(x_{i  + j_0 + d} \ldots x_{i + j_{n-1} +d}\right)
=  \left( \Phi_{\hat{r}f}(x) \right)_{i+d};
\end{eqnarray*}
we conclude that  $  \hat{r} \Phi_f = \sigma^d \Phi_{\hat{r} f}$.
If the CA complies with Def.~\ref{def:CA-r} the relation simplifies to
$\hat{r} \Phi_f = \Phi_{\hat{r}f}$.

We call $R = \{1, r\}$, the reflection group.
The direct product of $S_k$ and $R$, written as $S_kR$, is the group that contains all permutations,
the reflection and their products.
Suppose that $\hat{\alpha}$ and $\hat{\beta}$ are two operators.
Then
\[
\hat{\alpha} \hat{\beta} f (w)  = \alpha \hat{\beta} f(\alpha^{-1} w)
= \alpha \beta f( \beta^{-1} \alpha^{-1} w) = \widehat{\alpha \beta} f(w).
\]
The operators form a group that is in general isomorphic to $S_kR$,
but for $n=1$ (or $k=1$) the relation is only a homomorphism.
Note that the reflection operator commutes with all permutation operators.
If the global mapping $\Phi_f$ satisfies $\Phi_f = \hat{\alpha} \Phi_f$,
the CA is said to be invariant under the operator $\hat{\alpha}$.

The meaning of the operators defined above is illustrated by the following observation.
Suppose $\hat{\alpha}$ is
either one of the permutation operators or the reflection operator,
and consider two radius-based CAs (Def.~\ref{def:CA-r})
with the same state set and the same radius
and respectively, with local rule $f$ and local rule $\hat{\alpha} f$.
If the initial configuration of CA $A$ is $x$ and the one of CA $B$ is $\alpha x$,
then the same 1-1 correspondence between the configurations established by $\alpha$ persists
for all iterations:
$\alpha \Phi_f^t(x) = \Phi_{\hat{\alpha} f}^t (\alpha x)$ holds for any positive integer $t$
($\Phi^t$ denotes the $t$-th iteration of $\Phi$).
Suppose now that the CAs are of the general form of Def.~\ref{def:CA}.
The same relation holds, if $\alpha$ is a permutation, but if $\alpha = r$, it changes.
Then CA $A$ is after one step in configuration $\Phi_f(x)$, and CA $B$ in configuration
$\Phi_{\hat{r}f}(rx)$.
Using $\hat{r} \Phi_f = \sigma^d \Phi_{\hat{r} f}$, we obtain
$\Phi_{\hat{r}f}(rx) = \sigma^{-d} \hat{r} \Phi_f(rx) = \sigma^{-d} r \Phi_f(x)$,
so $ r \Phi_f(x) = \sigma^d \Phi_{\hat{r}f}(rx)$.
For any number $t$ of steps, the relation becomes
$ r \Phi_f^t(x) = \sigma^{dt} \Phi_{\hat{r}f}^t(rx)$.

\section{Preliminaries}
\label{sec:pre}

\subsection{Groups and Group Actions}

We assume some basic knowledge of groups as it can be found in introductory textbooks, e.g.
\cite{dummit2003abstract}, \cite{milneGT}, or \cite{rotman2012introduction}.
However, we briefly introduce the notation that is used in the following,
define group actions and related concepts and
state some propositions about them,
all of these to be found in more depth and more relaxed pace in the references above.

Let $H$ be a subgroup of $G$, denoted by $H \le G$.
If $g \in G$, the left coset of $H$ in $G$ is defined by
$gH = \{ gh \ | \ h \in H \}$.
The index $[G : H]$ of $H$ in $G$ denotes the number of left cosets of $H$ in $G$.
Lagrange's theorem states that $|G| = [G : H] \times |H|$.
If $g \in G$, the conjugate of $H$ by $g$ is the set
$gHg^{-1} = \{ ghg^{-1} \ | \ h \in H \}$,
which is also a subgroup isomorphic to $H$.
A subgroup $N$ of $G$ is called normal if $gNg^{-1} = N$ for all $g \in G$.

A group action of a group $G$ on a set $A$ is a map from $G \times A$ to $A$
satisfying the following properties:
\begin{enumerate}[(i)]
\item $g_1(g_2a) = (g_1g_2)a$ for all $g_1,g_2 \in G$, $a \in A$, and
\item $1a=a$, for all $a \in A$.
\end{enumerate}

The relation on $A$, defined by $a \sim b$ if and only if $a=gb$ for some $g \in G$, is an equivalence
relation.
The equivalence classes $[a] = \{ga : g \in G\}$ are called $G$-orbits (or just orbits),
and the set of orbits forms a partition of $A$, denoted by $ A / G$.
The length of an orbit $[a]$ is its size $|[a]|$.
An element $a \in A$ is fixed by $g \in G$ if $ga = a$.
The set of all group members that fix an element $a \in A$ is
called the stabilizer of $a$, that is the set
$\stb(a) = \{ g \in G \, | \, ga = a \}$,
which forms a subgroup of $G$.
If $g \in G$, the set of all fixed points of $g$ is denoted by
$\fix(g) = \{a \in A \, | \, ga = g \}$.
The notation is generalized to subgroups.
If $H \le G$, then the set
$\fix(H) = \{ a \in A \, | \, \mbox{$ga = a$ for all $g \in H$} \}
= \bigcap_{g \in H} \fix(g)$,
consists of all elements of $A$ that are fixed points for all $g \in H$.

\begin{prop2}[Orbit-Stabilizer Theorem]
If the group $G$ acts on $A$ and $a \in A$, then the length of the $G$-orbit which contains $a$ is
equal to the index of the stabilizer of $a$ in $G$:
\[
|[a]| = [G : \stb(a)].
\]
\end{prop2}

\begin{proof}
The map $ga \mapsto g\stb(a)$ that associates the element $ga$ of the orbit with the left coset
$g\stb(a)$ is well-defined and bijective.
\end{proof}

Every group G acts on the family of all its subgroups by conjugation.
The orbits of this group action are called conjugacy classes.
If $H \le G$, then the conjugacy class of $H$ is the set of subgroups $[H] =
\{ H^\prime \le G \, | \, \mbox{$H^\prime = gHg^{-1}$ for some $g \in G$} \}$.
The set of conjugacy classes is denoted by ${\cal C}(G)$.
If $H \le G$ and $H$ is normal, then the orbit containing $H$ is a singleton.
If $G$ is abelian, each orbit of ${\cal C}(G)$ is a singleton.
If $H_1$ and $H_2$ are subgroups of $G$, the relation $H_1 \le H_2$ is a partial order on the
set of subgroups.
It induces a partial order on ${\cal C}(G)$ by $[H_1] \leq [H_2]$ if and only if
there is a $H_1^\prime \in [H_1]$ and a $H_2^\prime \in [H_2]$ such that $H_1^\prime \leq H_2^\prime$.
The lattice $({\cal C}(G), \le)$ is called the reduced subgroup lattice of $G$.

We consider again a group $G$ acting on an (arbitrary) set $A$.
The orbit $O \in A/G$ is said to be of type $[H] \in {\cal C}(G)$ if the stabilizer
of some $a$ in $O$ belongs to $[H]$.
If two orbits $O_1$ and $O_2$ are of the same type, then there is a bijection
$\varphi: O_1 \to O_2$, such that $\varphi(ga) = g\varphi(a)$ for all $g \in G$ and all $a \in O_1$.
The function $\varphi$ is called a $G$-isomorphism.
Define
$
\orb(A/G, H) = \{ O \in A/G \, | \, \mbox{the type of $O$ is $[H]$} \}.
$
Note that $|A / G| = \sum_{[H] \in {\cal C}(G)} |\orb(A / G, H)|$.

Having established the terminology, we consider now the family of one-dimensional CAs with $k$ states and
$n$ neighbours.
The mapping $S_kR \times L(k,n) \rightarrow L(k,n)$;
$(\alpha,f) \mapsto \hat{\alpha}f$ fulfils the properties of a group action.
If $f \in L(k,n)$, the orbit of $f$
is the set $[f] = \{\hat{\alpha} f \ | \  \alpha \in S_kR \}$ and the
set of all orbits is denoted by $L(k,n) / S_kR$.
Local rules in the same orbit are connected by symmetry transformations, while
orbits of the same type cannot be distinguished by symmetry transformations alone.
We abbreviate $\orb( L(k,n) / S_kR, H )$ to $\orb(k,n,H)$.
The aim of this study is to develop a method for
determining $L(k,n) / S_kR$ and the sets $\orb(k,n,H)$ where $[H] \in {\cal C}(S_kR)$,
and in particular to derive formulas for the cardinalities of these sets.

\subsection{Counting Orbits}
\label{sec:count_orbs}

The following lemma relates the number of orbits to the number of fixed points of the group elements.

\begin{prop2}[Burnside's Lemma]
Let $G$ be a group acting on the set $A$. The number of $G-orbits$ is
\[
|A/G| = \frac{1}{|G|} \sum_{g \in G} |\fix(g)|.
\]
\end{prop2}

\begin{proof}
In  the sum $\sum_{g \in G} |\fix(g)|$, each $a \in A$ is counted $|\stb(a)|$ times
(for $\stb(a)$ consists of all those $g \in G$ which fix $a$).
If $a$ and $b$ lie in the same orbit, then $b=ga$ for a $g \in G$.
This implies $\stb(b) = g \stb(a) g^{-1}$, and in particular $|\stb(b)| = |\stb(a)|$.
So, the $[G : \stb(a)]$ elements constituting the orbit of $a$ are, in the above sum,
collectively counted $[G : \stb(a)] \times |\stb(a)|$ times.
Each orbit thus contributes $|G|$ to the sum, and so
$\sum_{g \in G} \fix(g) = |A/G| \times |G|$.
\end{proof}

The proof was adapted from \cite{rotman2012introduction}.
Burnside's lemma gives the total number of orbits.
Since we are also interested in the distribution of orbits by type,
we will use the following method in Section~\ref{sec:two_states} and Section~\ref{sec:three_states}.
Let $[H] \in {\cal C}(G)$.
The set $\stb^{-1}([H]) = \bigcup_{H^\prime \in [H]} \stb^{-1}(H^\prime)$
is the union of all orbits of type $[H]$, all having length $[G : H] = |G| \, / \, |H|$.
Thus
\begin{equation}
\label{eq:orb_H_card}
|\orb(A/G, H)| = |\stb^{-1}([H])| \, / \, [G : H] = |\stb^{-1}(H)| \times |[H]| \times |H| \, / \, |G|.
\end{equation}
In calculating the numbers $|\stb^{-1}(H)|$ we take a detour.
The mapping $\fix(H)$ from the set of subgroups of $G$ into $A$ does not create a partition of $A$:
if $H_1$ is a proper subgroup of $H_2$ ($H_1$ is a subgroup of $H_2$ and $H_1 \ne H_2$),
denoted by $H_1 < H_2$, then
$\fix(H_2) \subset \fix(H_1)$.
The mappings $\fix$ and $\stb$ are related
\begin{equation}
\label{eq:stab_inv_set}
\stb^{-1}(H) = \fix(H) \setminus \bigcup_{H < H^\prime} \stb^{-1}(H^\prime),
\end{equation}
where $H^\prime$ is also assumed to be a subgroup of $G$.
Since the sets $\stb^{-1}(H)$ are disjoint, Equation~(\ref{eq:stab_inv_set}) implies
\begin{equation}
|\stb^{-1}(H)| = |\fix(H)| -
 \sum_{H < H^\prime} |\stb^{-1}(H^\prime)|.
\label{stab_inv}
\end{equation}
To calculate $|\stb^{-1}(H)|$ for all subgroups, we start with $G$, for which
$\stb^{-1}(G) = \fix(G)$ holds, the calculation of the subgroups can then
be done successively.

In general, the numbers $|G|$ and $|\fix(g)|$ are not sufficient to determine the
numbers $|\orb(A / G, H)|$.
The following example describes two different group actions of the same group on the same set, so
that the numbers $|\fix(g)|$ are the same, but the distribution of orbits by type is different.
If a group $G$ acts on a set $A$, it induces a homomorphism $\varphi: G \to S_A$;
$g \mapsto (a \mapsto ga)$, where
$S_A$ denotes the symmetric group of $A$.
We can therefore associate group elements of $G$ with permutations of the set $A$.
Let $V = \{1,a,b,c\}$ the Klein four-group and $A =\{1,2,\ldots,6\}$.
Consider the group actions $\psi_1$ and $\psi_2$, both mappings of $V \times A$ onto $A$,
where $\psi_1$ has the permutation representation
\[
\sigma_1 = (), \sigma_a = (12)(34), \sigma_b = (34)(56), \sigma_c = (12)(56);
\]
and $\psi_2$ is given by
\[
\tau_1 = (), \tau_a = (12)(34), \tau_b = (13)(24), \tau_c = (14)(23).
\]
It is easily verified that these representations satisfy the group axioms and are isomorphic to $V$.
The first group action $\psi_1$ partitions $A$ into the orbits
$\{1,2\}$, $\{3,4\}$, and $\{5,6\}$, the second group action $\psi_2$ leads to the partition
$\{1,2,3,4\}, \{5\}$, and $\{6\}$.
Note that  $|\fix(a)| = |\fix(b)| = |\fix(c)| = 2$
and
$|A / V| = 3$ for both actions,
while, for instance,
$\orb(A / V, V) = \emptyset$ for the first action, but $\orb(A / V, V) = \{\{5\},\{6\}\}$ for the second one.

\subsection{Symmetry Operators acting on the Set of Local Rules}
\label{sec:sym_op_domain}

The domain of a local rule is the set $\Sigma^n$ of all words over $\Sigma$ having length $n$.
If $H$ is a subgroup of $S_kR$, the mapping $H \times \Sigma^n \rightarrow \Sigma^n$ defined
by $(\alpha, w) \mapsto \alpha w$ satisfies the properties of a group action.

We will now study mappings that are defined on an orbit of $\Sigma^n / H$.
Suppose $A \subset \Sigma^n$ is an $H$-orbit, and $g$ is a mapping $A \to \Sigma$.
If $\alpha \in H$ then $\alpha A = \{\alpha w | w \in A\} = A$.
This shows that the domain of $\hat{\alpha}g$ is also $A$.
Hence we can speak about functions defined on $A$ that are invariant under $H$.

The set $\{A_1, \ldots, A_p\}$ of all $H$-orbits is a  partition of $\Sigma^n$.
If $f$ is a local rule invariant under $H$,
then the restriction $f|A_i$ is clearly also invariant under $H$.
On the other hand, if $g_i: A_i \rightarrow \Sigma$; $i = 1 \ldots, p$,
is a sequence of mappings,
all invariant under $H$, then the local rule defined by
$f(w) = g_i(w)$ if $w \in A_i$ is also invariant under $H$.
This shows that invariant functions defined on the orbits are the building
blocks of invariant functions defined on $\Sigma^n$.

Let $A$ be again an $H$-orbit $A$  and suppose
$g: A \rightarrow \Sigma$ is invariant under $H$.
Choose a word $w$ in $A$, and  consider a different word in $A$, say $v$.
Since there is an $\alpha \in H$ such that $v = \alpha w$, the relation
$g(v) =  \hat{\alpha} g(v) = \alpha g(\alpha^{-1} v) = \alpha g(w)$ holds, and
the value of $g(v)$ is determined by $g(w)$.
This implies that there are at most $k$ different mappings $g: A \rightarrow \Sigma$
that are invariant under $H$.

\subsection{Examples}

We will study the group action of $\langle (01)r \rangle$ on some of the orbits of
two-state and three-state neighbourhoods.

\begin{enumerate}

\item
\label{example_k2_n_even}
Let $\Sigma = \{0,1\}$, and $n=2m$ be a positive even integer.
Suppose that the group $\langle  (01)r \rangle$ acts on $\Sigma^n$.
Consider the word $w=0^m1^m$ ($m$ copies of 0 followed by $m$ copies of 1).
The group action of $(01)r$ on $w$ results in
\[
(01)rw = (01)r(0^m1^m) = (01)(1^m0^m) = 0^m1^m = w;
\]
and so the set $ A = \{ w \}$ represents a singleton orbit.
Assume there is a function $f$ from $A$ to $\Sigma$ that is invariant under $\langle (01)r \rangle$.
Then $f$ has to satisfy the relation $f((01)rw) = (01)rf(w)$.
But since $f((01)rw) = f(w)$, we obtain the contradiction $f(w) = (01)f(w)$.
This shows that there is no local rule on $\Sigma^n$ that is invariant under $\langle (01) r \rangle$.

\item
\label{k2n_even}
Let $\Sigma$ be as above, let $n=2m+1$ be a positive odd integer, and let $w \in \Sigma^n$.
If we write $w = ucw$, where $u$ and $v$ are words of length $m$ and $c$ is a symbol of $\Sigma$,
we see that
\[
(01)rw = (01)r(ucv) = (01)(rv)c(ru) = ((01)rv)((01)c)((01)ru).
\]
Since for all $c \in \{0,1\}$, $c \ne (01)c$, we conclude that $w \ne (01)rw$, and
that $A = \{w, (01)rw \}$ is an orbit of length 2.
Choose a symbol $a$ from $\{ 0,1 \}$ and set $f(w) = a$.
If we set $f((01)rw) = (01)a$, the function $f$ is invariant under $\langle (01)r \rangle$.
Since $a$ was arbitrary, there are two functions with domain $A$ that are invariant under
$\langle (01)r \rangle$.

\item
Let $\Sigma = \{0,1,2\}$, $n=2m$ be a positive even integer, and $w=0^m1^m$.
The singleton $A = \{w\}$ is an orbit of $\langle (01)r \rangle$.
Assume that $f$ is invariant on $A$.
Then $f(w) = (01) f(w)$ must hold,
which is satisfiable by the choice $f(w) = 2$.
Hence there exists exactly one function from $A$ to $\Sigma$ that is invariant under
$\langle (01)r \rangle$.

\item
\label{k=3,n=2m+1}
Let $\Sigma$ be as above, but let $n=2m+1$ be a positive odd integer.
Consider a word $w$ of $\Sigma^n$ and write it in the form $w=ucv$,
where $u$ and $v$ are words of length $m$ and $c$ is a symbol.
The relation $w  = (01)r w$ leads to the constraints $v = (01)r u$ and $c=2$,
satisfied by $3^m$ words.
If $w$ is one of these words, a function $f$ defined on the orbit $\{w\}$ that is invariant is constrained
to the value $f(w) = 2$.
All other orbits of $\Sigma^n$ have length 2 and allow for three different invariant functions.

\end{enumerate}

\subsection{The Degree of an Orbit}

We have seen in Subsection~\ref{sec:sym_op_domain} that the number of invariant functions
on an orbit is at
most the size of the state set $k = |\Sigma|$.
The examples above have shown that the number of invariant functions
might also be smaller than $k$.
Let $H \le S_kR$ and $A$ be an orbit of $\Sigma^n / H$.
The degree of $A$ is defined to be the number of invariant functions on $A$, formally
\[
\deg(A) = | \{ f : A \to \Sigma \ | \ \mbox{ $\hat{\alpha} f = f$ for all $\alpha \in H$} \}|.
\]

The procedure for calculating the orbits of a CA which we will present shortly,
requires to determine the degree of a given orbit.
The following two lemmas will facilitate this task.
The first lemma states that to determine the degree of an orbit, it is sufficient to consider
all group actions on only
one word of the orbit.
The second lemma says that an orbit has degree $k$
 if the length of the orbit equals the order of the group.

\begin{prop}
\label{prop_inv}
Let $H \le S_kR$, $A$ be an $H$-orbit, $f$ be a function from $A$ to $\Sigma$, and
$w$ be any word of $A$.
If $f(\alpha w) = \alpha f(w)$ holds for all $\alpha$ in $H$,
then $f$ is invariant under $H$.
\end{prop}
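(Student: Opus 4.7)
The plan is to exploit transitivity of the $H$-action on the orbit $A$ together with closure of $H$ under multiplication. To show $f$ is invariant under $H$ we must verify $f(\beta v) = \beta f(v)$ for \emph{every} $v \in A$ and every $\beta \in H$; the hypothesis gives this only at the single point $w$, so the task is to propagate the identity from $w$ to all other points of the orbit.

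First I would pick an arbitrary $v \in A$ and an arbitrary $\beta \in H$. Since $A$ is the $H$-orbit of $w$, there is some $\gamma \in H$ with $v = \gamma w$. Then $\beta v = \beta\gamma w$, and because $H$ is a subgroup of $S_kR$ we have $\beta\gamma \in H$. Now apply the hypothesis twice: once with $\alpha = \gamma$, giving $f(v) = f(\gamma w) = \gamma f(w)$, and once with $\alpha = \beta\gamma$, giving $f(\beta v) = f((\beta\gamma)w) = (\beta\gamma)\,f(w)$. Combining,
\[
\beta f(v) \;=\; \beta\bigl(\gamma f(w)\bigr) \;=\; (\beta\gamma)\,f(w) \;=\; f(\beta v),
\]
which is exactly the invariance condition at $v$. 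Since $v$ and $\beta$ were arbitrary, $\hat{\beta}f = f$ for all $\beta \in H$.

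There is no real obstacle here; the proof is essentially a one-line consequence of the group axioms and the transitivity of $H$ on its own orbit. The only thing to be a bit careful about is that the hypothesis is a pointwise identity at $w$ quantified over $\alpha \in H$, rather than an identity of functions on $A$, so one must remember to invoke it at the \emph{two} different group elements $\gamma$ and $\beta\gamma$ rather than trying to manipulate a single equation. The associativity $\beta(\gamma f(w)) = (\beta\gamma)f(w)$ is automatic because the action of $S_kR$ on $\Sigma$ (extended elementwise to words) is a genuine group action, as noted earlier in the paper.
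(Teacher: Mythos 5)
Your proof is correct and follows essentially the same route as the paper's: write $v=\gamma w$, invoke the pointwise hypothesis at $\gamma$ and at $\beta\gamma$, and use $\beta(\gamma f(w))=(\beta\gamma)f(w)$ to conclude $f(\beta v)=\beta f(v)$, which (after the trivial reindexing $\alpha\mapsto\alpha^{-1}$) is exactly the invariance condition $\hat{\beta}f=f$. No gaps.
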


\begin{proof}
By definition, $f$ is invariant under $H$ if $f(w) = \alpha f(\alpha^{-1} w)$ for
all $w \in A$ and $\alpha \in H$.
If we replace $\alpha$ by its inverse $\alpha^{-1}$, we see that the condition becomes
equivalent to $f(w) = \alpha^{-1} f(\alpha w)$ or $f(\alpha w) = \alpha f(w)$ for all
$w \in A$ and $\alpha \in H$.

Let $v \in A$ and $\beta \in H$.
Assuming that the condition of the lemma is fulfilled, we have to show that
$f(\beta v) = \beta f(v)$.
The proof is almost trivial.
Since $A$ is an $H$-orbit, there is a $\gamma \in H$, such that $v = \gamma w$.
Then $f(\beta v) = f((\beta \gamma) w) = (\beta \gamma) f(w)
= \beta (\gamma f(w)) = \beta f(\gamma w) = \beta f(v)$.
\end{proof}

\begin{prop}
\label{freeorb}
Let $H \le S_kR$ and $A$ be an $H$-orbit.
If $|A| = |H|$, then $\deg(A) = k$.
\end{prop}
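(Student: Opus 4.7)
The plan is to exploit the hypothesis $|A|=|H|$ together with the orbit--stabilizer theorem to conclude that $H$ acts freely (hence regularly) on $A$, and then to apply Lemma~\ref{prop_inv} to construct exactly $k$ invariant functions, one for each possible value at a chosen base word.

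First I would fix any $w \in A$ and invoke the orbit--stabilizer theorem:
\[
|H| \;=\; |A| \;=\; [H:\stb(w)] \;=\; |H|/|\stb(w)|,
\]
so $\stb(w) = \{1\}$. The evaluation map $\varphi_w : H \to A$, $\alpha \mapsto \alpha w$, is therefore injective; it is surjective because $A$ is an $H$-orbit, and since $|H|=|A|$ it is a bijection. In particular, each $v \in A$ can be written uniquely as $v = \alpha w$ for some $\alpha \in H$.

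Next, for every $a \in \Sigma$ I would define $f_a : A \to \Sigma$ by $f_a(\alpha w) = \alpha a$. The uniqueness just noted makes $f_a$ well-defined (there is no ambiguity in reading off $\alpha$ from $\alpha w$, which is the only place the construction could fail). By construction $f_a(\alpha w) = \alpha f_a(w)$ for every $\alpha \in H$, so Lemma~\ref{prop_inv} immediately yields that $f_a$ is $H$-invariant. Distinct choices of $a$ give distinct values $f_a(w) = a$, so the $k$ functions $f_a$, $a \in \Sigma$, are pairwise different, proving $\deg(A) \ge k$. For the reverse inequality, any invariant $f : A \to \Sigma$ satisfies $f(\alpha w) = \alpha f(w)$ for all $\alpha \in H$, and the bijection $\varphi_w$ shows that $f$ is fully determined by the single value $f(w) \in \Sigma$, which admits at most $k$ choices. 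Combining the two bounds yields $\deg(A) = k$.

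I do not anticipate a serious obstacle: the cardinality hypothesis collapses the stabilizer to the identity, and from that point on the argument is essentially that the regular action of $H$ on itself, pushed forward to $A$, makes the value of an invariant function at one base word freely specifiable. The only subtlety worth checking is that $\alpha a$ is unambiguous when $\alpha$ involves the reflection factor, but this is automatic because $r$ acts trivially on single symbols of $\Sigma$, so $\alpha a$ depends only on the permutation component of $\alpha$.
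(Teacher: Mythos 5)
Your proposal is correct and follows essentially the same route as the paper: the hypothesis $|A|=|H|$ gives a unique $\alpha\in H$ with $v=\alpha w$, the function $f_a(\alpha w)=\alpha a$ is thereby well defined, and Lemma~\ref{prop_inv} yields invariance. Your explicit orbit--stabilizer step and the upper bound $\deg(A)\le k$ are just elaborations of what the paper states or has already established in Subsection~\ref{sec:sym_op_domain}.
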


\begin{proof}
Let $w$ be any word of $A$, and $a$ be any symbol of $\Sigma$.
Define a function $f$ from $A$ to $\Sigma$ as follows.
Set $f(w) = a$ and
let $v$ be another word of $A$.
Since $|A|=|H|$ there is exactly one $\alpha \in H$ such that $v = \alpha w$.
This shows that the value $f(v) = \alpha f(w) = \alpha a$ is well defined.
Lemma~\ref{prop_inv} implies that $f$ is invariant under $H$.
\end{proof}

\subsection{Outline of the Complete Calculation}
\label{subsec:outline_calc}

Given a local rule space $L(k,n)$,
the method to calculate the numbers $|\orb(k,n,H)|$, $[H] \in {\cal C}(S_kR)$ is as follows.

\begin{enumerate}

\item Construct the group $S_kR$ that is the direct product of the permutation group $S_k$ and
the reflective group $R = \{0, r\}$.
Having done that, construct the sublattice of all subgroups of $S_kR$.

\item For each conjugation class $[H]$ of ${\cal C}(S_kR)$ choose one representative $H \in [H]$.
Determine the orbits of
the group action $H \times \Sigma^n \to \Sigma^n$ and their degree.
Then
\[
|\fix(H)| = \prod_{A \in \Sigma^n / H} \deg(A).
\]

\item Beginning with $S_kR$ calculate $\stb^{-1}(H)$ for all selected representatives
by using Equation~(\ref{stab_inv}):
\[
|\stb^{-1}(H)| = |\fix(H)| -
 \sum_{H < H^\prime} |\stb^{-1}(H^\prime)|
\]

\item The number of orbits of type $[H]$ is given by Equation~(\ref{eq:orb_H_card}):
\[
|\orb(k, n, H)| = |\stb^{-1}(H)| \times |[H]| \times |H| \, / \, |G|.
\]

Sum up the numbers to obtain $|L(k,n) / S_kR|$, the total number of orbits (or apply
Burnside's lemma).

\end{enumerate}

\subsection{Shift-Equivalence}
\label{sec:shift-eq}
Another elementary equivalence relation was introduced in \cite{RUIVO2018280}, which we first illustrate
with an example in the domain of elementary CAs ($k=2$, $n=3$),
using Wolfram's nomenclature to label the rules,
see~\cite{WOLFRAM19841}.
Set $\Sigma=\{0,1\}$ and consider the elementary CAs $f_{12}$ and $f_{34}$, defined by
\[
f_{12}(w) =
\left\{
\begin{array}{ll}
1 & \mbox{if $w=010$ or $w=011$}  \\
0 & \mbox{otherwise}
\end{array}
\right.
\ \ \mbox{and} \ \
f_{34}(w) =
\left\{
\begin{array}{ll}
1 & \mbox{if $w=001$ or $w=101$}  \\
0 & \mbox{otherwise}
\end{array}
\right.
\]
Define a function $h: \Sigma^2 \to \Sigma$ by $h(01) = 1$ and $h(w) = 0$ if $w \ne 01$.
Then $f_{12}(a_0a_1a_2) = h(a_0a_1)$ and $f_{34}(a_0a_1a_2) = h(a_1a_2)$
for all $a_0a_1a_2 \in \Sigma^3$.
It is easy to see that $\Phi_{f_{34}} = \sigma \Phi_{f_{12}}$.

Two CAs $(k, N_1, f)$ and $(k, N_2, g)$ are said to be shift-equivalent if
$\Phi_f^{N_1} = \sigma^j \Phi_g^{N_2}$ for some integer $j$, denoted by
$\Phi_f^{N_1} \stackrel{\sigma}{\sim} \Phi_g^{N_2}$.
The relation $\stackrel{\sigma}{\sim}$ is an equivalence relation.
A mapping $p: \Sigma^n \to \Sigma^m$, $m \le n$, is called a projection if
there are integers $q_0, \ldots, q_{m-1}$ such that $0 \leq q_0 < \ldots < q_{m-1} \leq n-1$ and
$p(a_0\ldots a_{n-1}) = a_{q_0}\ldots a_{q_{m-1}}$ for all $a_0\ldots a_{n-1} \in \Sigma^n$.
The set $\{q_0, \ldots, q_{m-1}\}$ is called the index set of the projection.
A local rule $f: \Sigma^n \to \Sigma$ is called reducible if there
exists a local rule $h: \Sigma^m \to \Sigma$ and a projection $p: \Sigma^n \to \Sigma^m$ with $m < n$,
such that $f = h \circ p$, otherwise $f$ is said to be irreducible.
Let $N = \{j_0, \ldots, j_{n-1} \}$ be the neighbourhood of the CA.
If $f$ is reducible, then there is a rule $h$ and a projection $p$ such that $f = h \circ p$,
and the index set of $p$ is minimal.
If $\{q_0, \ldots, q_{m-1}\}$ is this index set,
the set $M = \{j_{q_0}, \ldots, j_{q_{m-1}} \} \subset N$ is called the support of $f$.

Suppose that $f$ is reducible with support $M$, $f = h \circ p$, the index set of $p$ is
$\{q_0, \ldots, q_{m-1}\}$,
 and there is an integer $t$ such that $M^\prime = M + t \subset N$.
Write $M^\prime = \{j_{q_0^\prime}, \ldots, j_{q_{m-1}^\prime} \}$.
Let $p^\prime$ be the projection $\Sigma^n \to \Sigma^m$ with index set
$\{q_0^\prime, \ldots, q_{m-1}^\prime\}$
and define a local rule with the same neighbourhood $N$ as $f$ by $f^\prime = h \circ p^\prime$.
From
\begin{eqnarray*}
\Phi_{f^\prime}(x)_i
& = & (h \circ p^\prime)(x_{i+{j_0}} \ldots x_{i+j_{n-1}})
= h(x_{i+j_{q_0^\prime}} \ldots x_{i+j_{q_{m-1}^\prime}})
=  h(x_{i+j_{q_0}+t} \ldots x_{i+j_{q_{m-1}}+t}) \\
& = & \sigma^t h(x_{i+j_{q_0}} \ldots x_{i+j_{q_{m-1}}})
=  \sigma^t (h \circ p)(x_{i+{j_0}} \ldots x_{i+{j_{n-1}}})
= \left( \sigma^t \Phi_f (x) \right)_i
\end{eqnarray*}
follows $\Phi_{f^\prime} = \sigma^t \Phi_f$, so $f$ and $f^\prime$ are shift-equivalent.

Let $N_1$ and $N_2$ be integer intervals.
Suppose $|N_1| = n$, $|N_2| = m$, $m \le n$, $f \in L(k, m)$, and $f$ is irreducible.
Then there are $n-m+1$ local rules $g_i \in L(k,n)$ such that
$\Phi^{N_1}_f \stackrel{\sigma}{\sim} \Phi^{N_2}{g_i}$.
The reflected rules $\Phi^{N_2}_{\hat{r}g_i}$ are all different, but shift-equivalent:
$\Phi^{N_1}_{\hat{r}f} \stackrel{\sigma}{\sim} \Phi^{N_2}_{\hat{r}g_i}$.
If $f$ is reflection-symmetric, $\hat{r}f = f$,
then, if at all, only one of the $g_i$, is reflection-symmetric.
Note also the general relation $\hat{r} \Phi_f \stackrel{\sigma}{\sim} \Phi_{\hat{r} f}$
from Subsection~\ref{sec:sym-operations}.
There is the special case of $|N_1| = 2$, and $|N_1|$ = 1.
Let $f \in L(k,1)$, and consider the rules $g_0, g_1 \in L(k,2)$, such
that $g_0(a_0a_1) = f(a_0)$ and $g_1(a_0a_1)$ = $f(a_1)$.
Here, equivalence by reflection and by shift coincide:
$g_1 =  \hat{r} g_0$ and $\Phi_{g_1} \stackrel{\sigma}{\sim}  \Phi_{g_0}$.
This shows that the orbits of $L(k, 2) / S_kR$ remain unchanged if shift-equivalence
is taken into account.

Let $N_1 = [-r_1,r_1]$, and $N_2 = [-r_2,r_2]$ with $r_2 < r_1$.
If $f \in L(k, 2r_2+1)$, $p(a_{-r_1} \ldots a_{r_1}) = a_{-r_2} \ldots a_{r_2}$, and
$g = f \circ p \in L(k, 2r_1+1)$, then $\Phi_g = \Phi_f$.
In this case, $\Phi_{\hat{r} g} = \Phi_{\hat{r} f}$ holds.

A local rule defined on a smaller neighbourhood might reappear in multiple copies that are shift-equivalent
when considering  larger neighbourhoods.
We conclude that shift-equivalence is another important elementary equivalence relation.
However, it is of a different nature since it concerns only local rules that can be defined on a proper subset
of the neighbourhood.
The consideration  of shift-equivalence into the presented framework, which is based on
the group of
permutations and reflection, goes beyond the scope of this study.
Results  for the number of equivalence classes,
obtained with a computer program
that also considered shift-equivalence,
can be found in \cite{RUIVO2018280} for small families of CAs ($k=2$, $n=2,3,4$ and
$k=3$, $n=2$).

\section{Two States}
\label{sec:two_states}

This section deals with the orbits of of one-dimensional two-state CAs,
that is $\Sigma = \{0,1\}$.

\subsection{The Group $S_2R$}

$S_2$, the symmetric group of degree 2, contains as its elements the identity $1$ and the transposition $(01)$.
The direct product $S_2R$ is given by
\[
S_2R = \langle (01) \rangle \langle r \rangle = \{1, (01) \}\{1 , r \}
= \{ 1, (01), r, (01)r \} = \langle (01), r \rangle.
\]
The notation $\langle \alpha, \beta, \ldots \rangle$ is called a generator,
and denotes the group with the property that
every element of the group can be written as finite product of the elements of the generator and
their inverses.
Since $S_2$ is abelian and $r$ commutes with $1$ and $(01)$, $S_2R$ is abelian too.
It is isomorphic to the Klein four-group.
The lattice of subgroups is depicted in Fig.~\ref{fig:lattice_s2r}.

\begin{figure}
\centering
\includegraphics[scale=0.7]{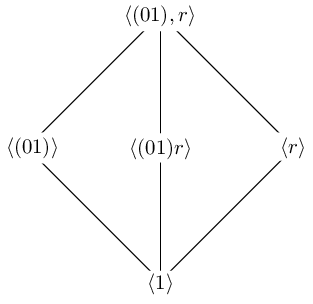}
\caption{The lattice of $S_2R$.}
\label{fig:lattice_s2r}
\end{figure}

\subsection{Odd Number of Neighbours}

Suppose the number of neighbours is odd, $n = 2m + 1$, $m=0,1,2,\ldots$.
For each subgroup $H$ of $S_2R$, we will calculate the number of orbits of $H$ acting on $\Sigma^n$.
For the next paragraphs $w$ denotes a word of $\Sigma^n$.

\begin{enumerate}

\item The group $\langle 1 \rangle$.
All orbits are of length 1, and so $\Sigma^n$ partitions into $2^{2m+1}$ singletons.

\item The group $\langle r \rangle$.
The relation $w = rw$ is satisfied if and only if the word $w$ is of the form $ua(ru)$ where
$|u| = m$ and $a$ is a symbol.
This relation is fulfilled by $2^{m+1}$ words.
The set of these words divides into $2^{m+1}$ orbits of length 1.
The remaining $2^{2m+1} - 2^{m+1}$ words of $\Sigma^n$ are all in orbits of length 2.
The total number of orbits is therefore
$(2^{2m+1} - 2^{m+1}) / 2 + 2^{m+1} = 2^m(2^m +1 )$.

\item The group $\langle (01) \rangle$.
Consider any word $w$ of $\Sigma^n$ and let $a$ denote the symbol in the centre of the word.
Then $(01)a$ is the symbol in the centre of the word $(01)w$.
This shows that the words $w$ and $(01)w$ are always different.
Hence $\Sigma^n$ partitions into $2^{2m}$ orbits of length 2.

\item The group $\langle (01)r \rangle$.
As before, the words $w$ and $(01)rw$ are always different.
Hence the number of orbits is again $2^{2m}$.

\item The group $S_2R$.
If $w=rw$ holds, then also $r(01)w = (01)w$ holds.
We have seen that there are $2^{m+1}$ words satisfying the relation $w = rw$, and so there
are $2^{m}$ orbits of length 2 consisting of the words $w$ and $(01)w$.
The remaining $2^{2m+1} - 2^{m+1}$ words divide into orbits of length 4.
Thus the total number of orbits is
$(2^{2m+1} - 2^{m+1})/4 + 2^{m} = 2^{m-1}(2^m+1)$.

\end{enumerate}
All the orbits of the groups have degree 2.
Put $H_1 = S_2R$, $H_2 = \langle (01) \rangle$,
$H_3 = \langle (01)r \rangle$, $H_4 = \langle r \rangle$ and $H_5 = \langle 1 \rangle$.
Set $a_i =|\fix(H_i)|$, $b_i = |\stb^{-1}(H_i)|$, and
$c_i = |\orb(2, n, H_i)|$, for $i=1, \ldots, 5$.
If $p_i$ is the number of orbits (of degree 2) of the group $H_i$, then
$a_i = 2^{p_i}$.
We get
\begin{eqnarray*}
& & b_1 = a_1 = c_1; \\
& & b_2 = a_2 - a_1, c_2 = b_2 / 2; \\
& & b_3 = a_3 - a_1, c_3 = b_3 / 2; \\
& & b_4 = a_4 - a_1, c_4 = b_4 / 2; \\
& & b_5 = a_5 - a_1 - b_2 - b_3 - b_4 = a_5 + 2 a_1 - a_2 - a_3 - a_4,
    c_5 = b_5 / 4.
\end{eqnarray*}
The total number of orbits is given by $\sum_i c_i$.
Expressing the $c_i$'s by the $p_i$'s leads to the following result.

\begin{prop2}
\label{prop:S2R_odd}
Let $m$ be a nonnegative integer.
\begin{enumerate}[(i)]
\item The set of rules $L(2, 2m+1)$ partitions into
\label{prop:k3_orbits}
$
\frac{1}{4} \left( 2 \times 2^{2^{2m}} + 2^{2^{m}(2^m+1)} + 2^{2^{2m+1}} \right)
$
orbits.
\item
The number of orbits of $L(2, 2m+1)$ by type are
\[
\begin{array}{lll}
|\orb(2, 2m+1, S_2R)| & = & 2^{2^{m-1}(2^m+1)};\\
|\orb(2, 2m+1,\langle (01) \rangle)| & = & \frac{1}{2}\left( 2^{2^{2m}} - 2^{2^{m-1}(2^m+1)} \right); \\
|\orb(2, 2m+1,\langle (01)r \rangle)| & = &  \frac{1}{2}\left( 2^{2^{2m}} - 2^{2^{m-1}(2^m+1)} \right);  \\
|\orb(2, 2m+1,\langle r \rangle)| &  = & \frac{1}{2}\left( 2^{2^m(2^m+1)} - 2^{2^{m-1}(2^m+1)} \right);  \\
|\orb(2, 2m+1,\langle 1 \rangle)| &   = & \frac{1}{4}\left(
2^{2^{2m+1}} + 2 \times 2^{2^{m-1}(2^m+1)} - 2 \times 2^{2^{2m}} - 2^{2^m(2^m+1)}
\right).
\end{array}
\]
\end{enumerate}
\end{prop2}
Part~(\ref{prop:k3_orbits}) is a particular case of Proposition 21 in \cite{CATTANEO19971593}.

\subsection{Even Number of Neighbours}

Suppose the number of neighbours is even, $n = 2m$, $m=1,2,\ldots$.
For the next paragraphs $w$ denotes a word of $\Sigma^n$.
Example~\ref{example_k2_n_even} has shown that there is no local rule invariant
under $\langle (01)r \rangle$, and by implication no local rule invariant under $S_2R$.
Thus, in calculating the number of orbits, we therefore only have to consider the remaining subgroups.

\begin{enumerate}

\item The group $\langle 1 \rangle$.
The set $\Sigma^n$ partitions into $2^{2m}$ orbits of length 1.

\item The group $\langle (01) \rangle$.
The set $\Sigma^n$ partitions into $2^{2m-1}$ orbits of length 2.

\item The group $\langle r \rangle$.
The set $\Sigma^n$ partitions into $2^{m-1}(2^m+1)$ orbits.

\end{enumerate}

All the orbits of the three groups above are of degree 2.
Put $H_1 = \langle r \rangle$, $H_2 = \langle (01) \rangle$, and
$H_3 = \langle 1 \rangle$.
Set $a_i =|\fix(H_i)|$, $b_i = |\stb^{-1}(H_i)|$, and
$c_i = |\orb(2,n,H_i)|$, for $i=1, 2, 3$.
If $p_i$ is the number of orbits (of degree 2) of the group $H_i$, then
$a_i = 2^{p_i}$.
We get
\begin{eqnarray*}
& & b_1 = a_1, c_1 = b_1 / 2; \\
& & b_2 = a_2, c_2 = b_2 / 2; \\
& & b_3 = a_3 - b_1 - b_2, c_3 = b_3 / 4.
\end{eqnarray*}
The total number of orbits is given by $\sum_i c_i$.
Expressing the $c_i$'s by the $p_i$'s leads to the following result.
\begin{prop2}
\label{prop:S2R_even}
Let $m$ be a positive integer.
\begin{enumerate}[(i)]
\item
The set $L(2, 2m)$ of local rules partitions into
$
\frac{1}{4} \left(2^{2^{m-1}(2^m+1)} + 2^{2^{2m-1}} + 2^{2^{2m}} \right)
$
orbits.
\item
The number of orbits of $L(2, 2m)$ by type are
\[
\begin{array}{lll}
|\orb(2, 2m,\langle (01) \rangle)| & = & \frac{1}{2} 2^{2^{2m-1}} ;  \\
|\orb(2, 2m,\langle r \rangle)| &  = & \frac{1}{2} 2^{2^{m-1}(2^m+1)} ;  \\
|\orb(2, 2m,\langle 1 \rangle)| &  = &  \frac{1}{4}\left(
 2^{2^{2m}} - 2^{2^{2m-1}} - 2^{2^{m-1}(2^m+1)}
\right).
\end{array}
\]
\end{enumerate}
\end{prop2}

\begin{table}
\begin{center}
\caption{Count of two-state orbits by type.}
\label{tab:s2}
\begin{tabular}{|l|r|r|r|r|r|}
\hline
       & \multicolumn{5}{|c|}{$|\orb(2,n, H)|$} \\ \hline
$H$    & $n=1$ & $n=2$ & $n=3$ & $n=$4 & $n=5$ \\ \hline
$\langle (01), r \rangle$ & 2 & 0 &  8 & 0 & 1\,024 \\ \hline
$\langle (01)r \rangle$ 	&  0 & 0 & 4 & 0 & 32\,256 \\ \hline
$\langle (01) \rangle$ 	&  0 & 2 & 4 & 128 & 32\,256 \\ \hline
$\langle r \rangle$     &  1& 4 & 28 & 512 & 523\,776 \\ \hline
$\langle 1 \rangle$ 		&  0 & 1  & 44 & 16\,064 & 1\,073\,447\,424 \\ \hline
							&  3 & 7  & 88 & 16\,704 & 1\,074\,036\,736 \\ \hline
\end{tabular}
\end{center}
\end{table}

Table \ref{tab:s2} depicts the number of orbits of two-state CAs for
a neighbourhood size $n=1, \ldots,5$.
For each $n$ and each subgroup $H$ of $S_2R$ the table gives the number of orbits
of $\stb^{-1}(H)$.
The last row lists the total number of orbits for a given $n$.

\section{Three States}
\label{sec:three_states}

After calculating the orbits of one-dimensional two-state CAs in
Section~\ref{sec:two_states} this section deals with one-dimensional
three-state CAs, that is $\Sigma = \{0,1,2\}$.

\subsection{The Group $S_3R$}
The group $S_3R$, which is the direct product of the symmetric group $S_3$
and the reflection group $R$, contains all the symmetry operators of
one-dimensional three-state CAs.

Some remarks:
\begin{enumerate}
\item $S_3$ is not abelian, and neither is $S_3R$, for instance
$(01)(02) = (021)$, but $(02)(01) = (012)$.
\item The depiction of the lattice of subgroups, Fig.~\ref{fig:S3R_lattice},
 arranges groups of equal order in the same row.
From bottom to top the orders are 1, 2, 3, 4, 6, and 12.
\item The dashed rectangles demarcate conjugacy classes of subgroups.
The other subgroups are normal and form singleton classes with respect to conjugation.
Collapsing the conjugacy classes into single nodes result in the reduced subgroup lattice.
\item Groups are specified by generators, e.g. $S_3 = \langle (01), (12) \rangle$.
\item From $[(012)r]^3 = r$ follows $\langle (012)r \rangle = \langle (012), r \rangle$.
\item The group $\langle (01)r, (012) \rangle$ consists of the elements $1, (01)r, (12)r, (20)r, (012)$,
and $(021)$.
\end{enumerate}

\begin{figure}
\centering
\includegraphics[scale=0.7]{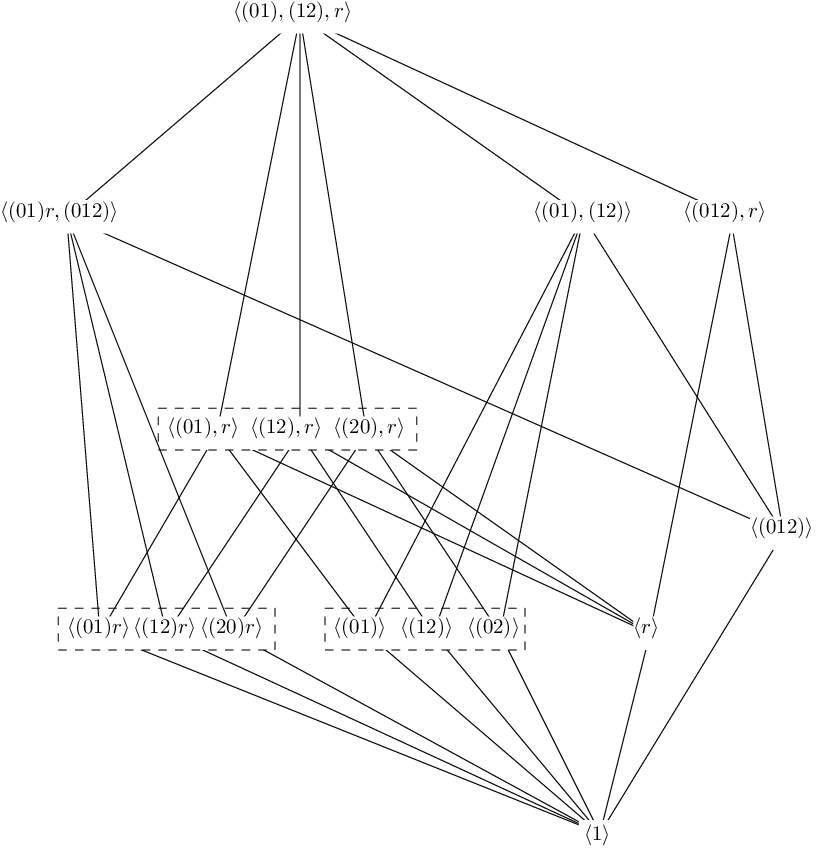}
\caption{The lattice of $S_3R$.}
\label{fig:S3R_lattice}
\end{figure}

\subsection{The Orbits of the Subgroups of $S_3R$ acting on $\Sigma^n$}
\label{sec:orbis_S3R}

The following lemma facilitates the calculation of orbits of
subgroups that contain the permutation $(012)$.

\begin{prop}
\label{perm_012}
Suppose $H$ is a subgroup of $S_3R$ that contains the permutation $(012)$, and suppose
that $A$ is an orbit of $H$ acting on $\Sigma^n$.
Then the number 3 divides the length of $A$.
\end{prop}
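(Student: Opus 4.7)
The plan is to exploit the cyclic subgroup $K = \langle (012) \rangle$ of $H$, which has order $3$. Since $K \le H$, any $H$-orbit $A$ is in particular $K$-invariant, so $K$ acts on $A$ and partitions it into $K$-orbits. By the Orbit-Stabilizer Theorem, every $K$-orbit has length dividing $|K| = 3$, hence length $1$ or $3$. If I can rule out fixed points of $(012)$ on $A$, then every $K$-orbit inside $A$ has length exactly $3$, and summing over them gives $3 \mid |A|$.

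The fixed-point analysis is the only substantive step, and it is essentially immediate. Recall that the permutation $(012)$ acts on $\Sigma = \{0,1,2\}$ by the $3$-cycle $0 \mapsto 1 \mapsto 2 \mapsto 0$, and its action on a word $w = a_0 \ldots a_{n-1} \in \Sigma^n$ is defined componentwise, $(012) w = ((012)a_0) \ldots ((012)a_{n-1})$. Hence $(012) w = w$ forces $(012) a_i = a_i$ for every $i$, but the $3$-cycle $(012)$ has no fixed symbol in $\Sigma$. Since $n \ge 1$ (because $N$ is a nonempty neighbourhood in Def.~\ref{def:CA}), there is at least one coordinate to obstruct equality, and we conclude $(012) w \ne w$ for all $w \in \Sigma^n$.

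Combining these two observations finishes the proof: every $\langle (012) \rangle$-orbit contained in $A$ has length $3$, and $A$ is the disjoint union of such orbits, so $|A|$ is a multiple of $3$. I do not foresee any real obstacle; the only care needed is to point out that $\langle (012) \rangle$ actually sits inside $H$ (given), that $A$ is closed under this subaction (because it is closed under the larger group $H$), and that the elementwise action has no fixed word, which is where the assumption $k = 3$ (so that $(012)$ is a derangement of $\Sigma$) is used.
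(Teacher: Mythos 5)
Your proof is correct. It shares with the paper the one essential observation---that $(012)$, acting componentwise, fixes no word of $\Sigma^n$ because it is a derangement of $\Sigma$ and $n\ge 1$---but the counting mechanism is different. The paper fixes a word $w\in A$ and works with its stabilizer in the full group $H$: since no word is fixed by $(012)$, and $(012)$, $(021)$ are the only elements of order $3$ in $H$, the order $|\stb(w)|$ is not divisible by $3$, and then $|\stb(w)|\,|A| = |H|$ together with $3 \mid |H|$ forces $3 \mid |A|$. You instead restrict the action to the cyclic subgroup $K=\langle (012)\rangle$ and partition $A$ into $K$-orbits, each of length $1$ or $3$; the no-fixed-word fact kills the length-$1$ case, so $|A|$ is a sum of $3$'s. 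Your route is slightly more elementary and self-contained: it needs neither the enumeration of the order-$3$ elements of $H$ (which is what lets the paper pass from ``$(012),(021)\notin\stb(w)$'' to ``$3\nmid|\stb(w)|$'') nor the divisibility $3\mid |H|$, and it generalizes verbatim to any group containing a fixed-point-free element of prime order. The paper's version, on the other hand, gives information about the stabilizers $\stb(w)$ themselves, which is in the spirit of the orbit-type bookkeeping used throughout Section~\ref{sec:three_states}.
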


\begin{proof}
There is no word in $\Sigma^n$ that is invariant under $(012)$.
This is equivalent to saying that for all words $w$ in $\Sigma^n$,
the permutation $(012)$ is not an element of the subgroup $\stb(w)$, and hence
that the group $\langle (012) \rangle$ is not a subgroup of $\stb(w)$.
Since $(012)$ and $(210)$ are the only elements of order 3 in $H$, the
subgroup $\stb(w)$ is not divisible by 3.
From $|\stb(w)| \ |A| = |H|$ follows the proposition.
\end{proof}

The next simple lemma will help us in classifying orbits of degree 1.

\begin{prop}
\label{S_3R_non_free}
Suppose $H$ is a subgroup of $S_3R$ and $A$ is an orbit of $H$ acting on $\Sigma^n$.
Then $\deg(A) \le 1$
\begin{enumerate}[(i)]
        \item if $(01) \in H$ and there is a $w \in A$ such that $w=(01)w$; or
        \item if $(01)r \in H$ and there is a $w \in A$ such that $w = (01)rw$.
\end{enumerate}
\end{prop}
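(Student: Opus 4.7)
The plan is to reduce the claim to a one-word calculation and then invoke Lemma~\ref{prop_inv}. Recall that a function $f : A \to \Sigma$ is invariant under $H$ iff $f(\alpha w) = \alpha f(w)$ for all $\alpha \in H$ (this is the form of the invariance condition derived in the proof of Lemma~\ref{prop_inv}). When $\alpha$ happens to stabilise $w$, this collapses to $f(w) = \alpha f(w)$, forcing $f(w)$ to be a fixed point of the action of $\alpha$ on $\Sigma = \{0,1,2\}$.

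For case (i) I would take $\alpha = (01) \in H$ together with the hypothesised $w$ satisfying $(01)w = w$. The permutation $(01)$ acts on single symbols by swapping $0$ and $1$ and fixing $2$, so $f(w) = (01)f(w)$ forces $f(w) = 2$. For case (ii) the argument is identical: the reflection $r$ was defined to act trivially on single symbols ($ra = a$ for all $a \in \Sigma$), so the action of $(01)r$ on an element of $\Sigma$ coincides with that of $(01)$, and again $f(w) = 2$ is the only admissible value.

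To conclude I would apply Lemma~\ref{prop_inv}: an invariant $f$ on the orbit $A$ is completely determined by its value at any single word, since every $v \in A$ can be written $v = \gamma w$ for some $\gamma \in H$, and invariance forces $f(v) = \gamma f(w) = \gamma \cdot 2$. Hence at most one candidate invariant function exists on $A$, which is exactly the statement $\deg(A) \le 1$.

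The only point that looks like it could cause trouble is a well-definedness check: if two elements $\gamma_1, \gamma_2 \in H$ both send $w$ to the same $v$, the prescribed values $\gamma_1 \cdot 2$ and $\gamma_2 \cdot 2$ must agree, or no invariant function exists. This is not a real obstacle, because the lemma only asserts $\deg(A) \le 1$; if such a conflict arises one simply gets $\deg(A) = 0$, which is still consistent with the claim. So no case analysis on the full stabiliser is needed, and the proof stays short.
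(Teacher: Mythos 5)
Your proof is correct and follows essentially the same route as the paper: both force $f(w)=\alpha f(w)$ at the stabilised word (noting that $r$ acts trivially on single symbols, so $(01)r$ acts as $(01)$ on $\Sigma$), which pins $f(w)=2$, and then use the fact that an invariant function on an orbit is determined by its value at one word. Your extra remark on well-definedness is harmless and consistent with the paper's (even terser) argument.
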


\begin{proof}
Let $\alpha$ be $(01)$ or $(01)r$ and suppose that $f$ is an invariant function
from $A$ to $\Sigma$.
Then the relation $f(w) = \alpha^{-1} f(\alpha w) = \alpha f(w)$ implies $f(w)=2$.
\end{proof}

All words below are understood to be words over $\Sigma$.
The word $w$ always denotes a word of $\Sigma^n$.
Sometimes we will write $w$ as a concatenation of two words, that is $w = uv$, if $n=2m$ is even,
and as a concatenation of a word, a symbol, and a further word, that is $w = uav$, if $n=2m+1$ is odd.
If we do so, we assume that $|u| = |v| = m$.
For each subgroup $H$ of $S_3R$ the number of free orbits of $H$ acting on $\Sigma^n$
is calculated as follows.
In the calculations themselves we will make frequent use of
Lemma~\ref{freeorb} and Lemma~\ref{S_3R_non_free}, without explicitly referencing them.

\begin{enumerate}

\item The group $\langle 1 \rangle$.
$\Sigma^n$ splits up into $3^n$ orbits of length 1.

\item The group $\langle r \rangle$.
A calculation similar to the state set of size 2 yields
$(3^{2m} + 3^m)/2$ orbits of degree 3 if $n=2m$ is even, and
$(3^{2m+1} + 3^{m+1})/2$ orbits of degree 3 if $n=2m+1$ is odd.

\item The groups $\langle (01) \rangle$, $\langle (12) \rangle$, $\langle (20) \rangle$.
We study  $\langle (01) \rangle$.
Only the word $w=2^n$ ($n$ copies of $2$) satisfies the equation $(01)w = w$.
Hence the orbit $\{2^n\}$ is of degree 1.
The remaining words split up into $(3^n-1)/2$ orbits of length 2 and degree 3.

\item The groups $\langle (01)r \rangle$, $\langle (12)r \rangle$, $\langle (20)r \rangle$.
We study  $\langle (01)r \rangle$.

Suppose $n = 2m$ is even and $w=uv$.
The relation $uv = (01)r(uv) = ((01)rv)((01)ru)$ implies $v=(01)ru$ and so is
satisfied by $3^m$ words which form $3^m$ orbits of length 1 and degree 1.
The remaining words split up into $(3^{2m} - 3^m)/2$ orbits of degree 3.

Suppose $n=2m+1$ is odd and $w=uav$.
The relation $uav = (01)r(uav) = ((01)rv)((01)a)((01)ru)$ implies $v=(01)ru$ and $a=2$, which is
satisfied by $3^m$ words resulting in  $3^m$ orbits of length 1 and degree 1.
The remaining words split up into $(3^{2m+1} - 3^m)/2$ orbits of degree 3.

\item The group $\langle (012) \rangle$.
All orbits are of the form $\{w, (012)w, (210)w \}$ with pairwise different elements.
This shows that $\Sigma^n$ partitions into $3^{n-1}$ orbits of degree 3.

\item The group $S_3 = \langle (01), (12) \rangle$.
The orbit $\{0^n, 1^n, 2^n\}$ is of degree 1 because $(01)2^n = 2^n$.
The remaining words split up into $(3^{n-1}-1)/2$ orbits of length 6 and degree 3.

\item The groups $\langle (01), r \rangle$, $\langle (12), r \rangle$, $\langle (20), r \rangle$.
We study $\langle (01), r \rangle$.
Only the word $2^n$ satisfies the relation $w = rw = (01)w = (01)rw$.
The corresponding orbit $\{2^n\}$ is of degree 1.

Suppose $n = 2m$ is even.
There are two ways that the orbit $\{w$, $rw$, $(01)w$, $(01)rw \}$
can fold up into orbits of length 2.

First, if $w=rw$ and $w\neq (01)w$ holds.
From the set of $3^m$ words that satisfy $w=rw$ we remove $2^n$.
The remaining words in this set split up into $(3^m-1)/2$
orbits of length 2 and degree 3.

Second, if $rw = (01)w$ and $rw \neq w$ holds.
If $w=uv$ the relation $rw = (01)w$ implies $v=(01)ru$.
As above, we remove from the set of $3^m$ words that satisfy this relation the word $2^n$
to obtain $(3^m-1)/2$ orbits of length 2 and degree 1.

The remaining $3^{2m} - 2(3^m-1) - 1$ words in $\Sigma^n$ split up
into orbits of length 4 and degree 3.
Summing up the orbits of degree 3, we obtain for their number
$(3^{2m} -2(3^m - 1) -1)/4 + (3^m-1)/2 = (3^{2m}-1)/4$.

Suppose $n=2m+1$ is odd.
We consider again the two different types of orbits of length 2.

The first occurs, if $w=rw$ and $w\neq (01)w$ holds.
A similar calculation as above obtains
$(3^{m+1}-1)/2$ orbits of length 2 and degree 3.

The second occurs, if $rw = (01)w$ and $rw \neq w$.
If $w$ is written as $uav$, the relation
becomes $(rv)a(ru) = ((01)u)((01)a)((01)v)$, yielding the constraints $v=(01)ru$ and $a=2$
which are satisfied by $3^m$ words.
Removing the word $2^n$ from this set results in $(3^m-1)/2$ orbits of length 2 and degree 1.

The remaining  $3^{2m+1} - (3^{m+1}-1) - (3^m-1) - 1 = (3^{m+1}-1)(3^m-1)$ words
in $\Sigma^n$ split up
into free orbits of length 4.
Hence the total number of orbits of degree 3 is
$(3^{m+1}-1)(3^m-1)/4 + (3^{m+1}-1)/2 = (3^{m+1}-1)(3^m+1)/4$.

\item The group $\langle (012)r \rangle$.
If $w = rw$, then $\{ w, (012)w, (210)w \}$ is an orbit of length 3.
If $w \neq rw$, the orbit containing $w$ is of length 6.
The calculation is similar to the one of the group $\langle r \rangle$.
If $n=2m$ is even, $\Sigma^n$ partitions into $(3^{2m-1} + 3^{m-1})/2$ orbits of degree 3,
if $n=2m+1$ is odd, $\Sigma^n$ partitions into $(3^{2m} + 3^{m})/2$ orbits of degree 3.

\item The group $\langle (01)r, (012) \rangle$.
An orbit is of length 3 if and only if the orbit contains a word $w$
that satisfies the relation $w = (01)rw$.

Suppose $n=2m$ is even and write $w=uv$.
Then $w=(01)rw$ holds if and only if $v=(01)ru$ holds.
Each of these $3^m$ words is in a different orbit.
This shows that there are $3^m$ orbits of length 3.
All of them have degree 1.
The remaining words split up into $(3^{2m-1}-3^m)/2$ orbits of length 6 and degree 3.

Suppose $n=2m+1$ is odd and write $w=uav$.
Then $w=(01)rw$ holds if and only if $v=(01)ru$ holds and $a=2$.
Again, this shows that there are $3^m$ orbits of length 3 and degree 1.
The remaining words split up into $(3^{2m}-3^m)/2$ orbits of length 6 and degree 3.

\item The group $S_3R$.
\label{orbits_S3R}
The set $\{0^n,1^n,2^n\}$ is the only orbit of length 3. The degree is 1.
If $n=1$ it is the only orbit.

Suppose $n=2m$ is even.
By Lemma~\ref{perm_012} the next possible length of an orbit is 6.
If $w$ is a word in an orbit of length 6, either $w = rw$ or
$w = \alpha w$, where $\alpha$ denotes a transposition.
There are $3^m$ words for which $w=rw$.
If we subtract the 3 words of the orbit of length 3, we obtain
$3(3^{m-1}-1)$ words that split up into $(3^{m-1}-1)/2$ orbits of degree 3.
There are also $3^m$ words for which $w=(01)rw$, or
$3 \times 3^m$ words for which $w= \alpha rw$, where $\alpha$ denotes any of
the three transpositions $(01), (12), (20)$.
If we subtract the 3 words of the orbit of length 3, we obtain
$3(3^{m}-1)$ words that split up into $(3^{m}-1)/2$ orbits of degree 1.

The remaining words of $\Sigma^n$ partition into orbits of length 12.
The number of remaining words is
$3^{2m} - 3(3^{m-1}-1) - 3(3^{m}-1) - 3$, splitting up into
$(3^m-1)(3^{m-1}-1)/4$ orbits of length 12 and degree 3.
The total number of orbits of degree 3 then is $(3^m-1)(3^{m-1}-1)/4 + (3^{m-1}-1)/2
= (3^m+1)(3^{m-1}-1)/4$.

Suppose $n > 1 $ and $n=2m+1$ is odd.
There are $3^{m+1}$ words that satisfy $w=rw$.
Similar to the calculation above,
we obtain $3^{m+1} -3$ words that split up into $(3^m-1)/2$ orbits of degree 3.
The relation $ucv = (01)r(ucv)$ implies $v=(01)r$ and $c=2$.
As above,
we see that $3^{m+1} - 3$ words satisfying $w = \alpha r w$ split up into $(3^{m} -1)/2$
orbits of degree 1.

The remaining words of $\Sigma^n$ partition into orbits of length 12.
The number of remaining words is
$3^{2m+1} - 6(3^m-1) - 3$, splitting up into
$(3^m-1)^2/4$ orbits of length 12 and degree 3.
The total number of orbits of degree 3 then is $(3^m-1)^2/4 + (3^m-1)/2 = (3^{2m}-1)/4$.
Fig.~\ref{fig:orbits_s3r} depicts the orbits of $\Sigma^3 / S_3R$.
The graphs are informal, highlighting certain symmetries of the orbits.

\begin{figure}
\centering
\includegraphics[scale=0.7]{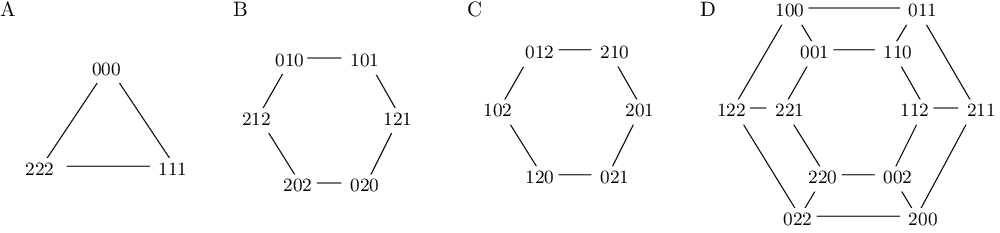}
\caption{The orbits of $S_3R$ acting on $\{0,1,2\}^3$.}
\label{fig:orbits_s3r}
\end{figure}

\end{enumerate}

\subsection{Counting the Orbits of $L(3,n) / S_3R$ by Type}

Too shorten the notation, we enumerate the representatives of the conjugation classes:
$H_1 = S_3R$, $H_2=\langle (01)r, (012) \rangle$, $H_3=\langle (01), (12) \rangle$,
$H_4 = \langle (012)r \rangle$, $H_5 = \langle (01), r \rangle$,
$H_6 = \langle (012) \rangle$, $H_7=\langle (01)r \rangle$,
$H_8=\langle (01) \rangle$, $H_9 = \langle r \rangle$, and $H_{10} = \langle 1 \rangle$.
We set $a_i =|\fix(H_i)|$, $b_i = |\stb^{-1}(H_i)|$, and $c_i = |\orb(3,n,H_i)|$.
In the previous subsection we determined
the degree of orbits for one representative $H_i$ of the conjugation classes $[H_i]$.
If $p_i$ is the number of orbits of $H_i$ of degree 3, then $a_i = 3^{p_i}$.
We express now $b_i$ and $c_i$ in terms of $a_i$. Then
\begin{eqnarray*}
& & b_1 = a_1 = c_1; \\
& & b_2 = a_2 - a_1, c_2 = b_2 / 2; \\
& & b_3 = a_3 - a_1, c_3 = b_3 / 2; \\
& & b_4 = a_4 - a_1, c_4 = b_4 / 2; \\
& & b_5 = a_5 - a_1, c_5 = 3 b_5 / 3 = b_5; \\
& & b_6 = a_6 - a_1 - b_2 - b_3 - b_4 = a_6 + 2 a_1 - a_2 - a_3 - a_4, c_6 = b_6 / 4; \\
& & b_7 = a_7 - a_1 - b_2 - b_5 = a_7 + a_1 - a_2 - a_5, c_7 = 3 b_7 / 6 = b_7 / 2;\\
& & b_8 = a_8 - a_1 - b_3 - b_5 = a_8 + a_1 - a_3 - a_5, c_8 = 3 b_8 / 6 = b_8 / 2;\\
& & b_9 = a_9 - a_1 - b_4 - 3 b_5 = a_9 + 3a_1 -a_4 - 3a_5, c_9 = b_9 / 6;\\
& & b_{10} = a_{10} - a_1 - b_2 - b_3 - b_4 - 3 b_5 - b_6 - 3 b_7 - 3 b_8 - b_9 \\
& & \ \ = a_{10} -6a_1 + 3a_2 + 3a_3 + a_4 +6a_5 - a_6 - 3a_7 -3a_8 - a_9, c_{10} = b_{10} / 12.
\end{eqnarray*}

The equations for $c_5$, $c_7$, and $c_8$ contain an additional factor 3 due to
the size of the conjugation classes, see Equation~\ref{eq:orb_H_card} and Fig.~\ref{fig:S3R_lattice}.
The total number of orbits is therefore
\begin{eqnarray}
\label{eq:s3_tot_sum_orb}
\sum_{i=1}^{10} c_i = \frac{1}{12} \left(a_9 +a_{10} \right) + \frac{1}{6}\left(a_4+a_6\right)
+ \frac{1}{4} \left(a_7+ a_8\right).
\end{eqnarray}

\begin{prop2}
\label{prop:s3r}
Let $m$ be a nonnegative integer.
The set $L(3,2m+1)$ of local rules partitions into
\begin{eqnarray*}
& & \frac{1}{12} \left(3^{(3^{2m+1} + 3^{m+1})/2} + 3^{3^{2m+1}}\right)
+ \frac{1}{6}\left( 3^{(3^{2m} + 3^{m})/2} +3^{3^{2m}} \right)
 +  \frac{1}{4} \left(3^{(3^{2m+1} - 3^m)/2}+ 3^{(3^{2m+1}-1)/2}\right)
\end{eqnarray*}
orbits.

Let $m$ be a positive integer.
The set $L(3,2m)$ of local rules partitions into
\begin{eqnarray*}
&  & \frac{1}{12} \left(3^{(3^{2m} + 3^m)/2} + 3^{3^{2m}}\right)
+\frac{1}{6}\left( 3^{(3^{2m-1} + 3^{m-1})/2}  + 3^{3^{2m-1}} \right)
+ \frac{1}{4} \left(3^{(3^{2m} - 3^m)/2}+ 3^{(3^{2m}-1)/2}\right)
\end{eqnarray*}
orbits.
\end{prop2}

The total number of orbits can be derived more simply from Burnside's Lemma:
\begin{eqnarray*}
& & |L(3,n) / S_3R| = \frac{1}{|S_3R|} \sum_{\alpha \in S_3R} |\fix(\alpha)| = \\
& & \, \, \, \, \, \, \frac{1}{12} \left(
|\fix(1)| + |\fix(r)| + 3 |\fix((01))| + 3 |\fix((01)r)| + 2|\fix((012))| + 2|\fix((012)r)|
\right),
\end{eqnarray*}
where we have used the relation $|\fix(\alpha)| = |\fix( \beta \alpha \beta^{-1})|$.
If we note that $\fix(\alpha) = \fix( \langle \alpha \rangle)$,
we obtain Equation~(\ref{eq:s3_tot_sum_orb}).
Since the calculation depends only on the cyclic subgroups, Burnside's lemma is preferable,
if only the total number of orbits is required.

\begin{table}
\begin{center}
\caption{Count of three-state orbits by type.}
\label{tab:s3n2n3}
\begin{tabular}{|l|r|r|r|}
\hline
$H$ & $|\orb(3,1,H)|$ & $|\orb(3,2,H)|$ & $|\orb(3,3,H)|$ \\ \hline
$\langle (01), (12), r \rangle$ & 1 &       1 &                   9  \\ \hline
$\langle (01)r, (012) \rangle$ 	& 0 &       0 &                   9  \\ \hline
$\langle (01), (12) \rangle$ 	& 0 &       1 &                  36  \\ \hline
$\langle (012)r \rangle$ 		& 1 &       4 &                 360  \\ \hline
$\langle (01), r \rangle$ 		& 2 &       8 &              6\,552  \\ \hline
$\langle (012) \rangle$ 		& 0 &       4 &             4\,716   \\ \hline
$\langle (01)r \rangle$ 		& 0 &       9 &           262\,431   \\ \hline
$\langle (01) \rangle$ 			& 0 &      35 &           793\,845   \\ \hline
$\langle r \rangle$ 			& 3 &     116 &       64\,566\,684   \\ \hline
$\langle 1 \rangle$ 			& 0 &  1\,556 & 635\,433\,642\,324   \\ \hline
								& 7 &  1\,734 & 635\,499\,276\,966   \\ \hline
\end{tabular}
\end{center}
\end{table}

Table \ref{tab:s3n2n3} lists the cardinalities of orbits by type,
i.e. $|\orb(3,n, H)|$, $[H] \in {\cal C}(S_3R)$,
for a neighbourhood size of one, two, and three.
The last row gives the total number of orbits, that is $|L(3, n) / S_3R|$.
In contrast to two-state CAs,
we have refrained from giving explicit formulas for $c_i=|\orb(3,n,H_i)|$ in the above proposition.
These formulas become lengthy, but can be easily derived by expressing the $c_i$’s
in terms of the $p_i$'s above.

\subsection{Constructing local rules that are invariant}
The focus of this study so far has been on deriving formulas for the cardinalities of the orbits of
CA rules by type as well as for their total number.
We point out that an analogous method to the one described
in Subsection~\ref{subsec:outline_calc} can be used to actually construct the local rules that are invariant.
We will not treat this subject systematically, but give an example.
Let $\Sigma = \{0,1,2\}$ and $n=3$.
The orbits of $S_3R$ acting on $\Sigma^3$ and their degrees were derived in Subsection~\ref{sec:orbis_S3R} and
depicted in Fig.~\ref{fig:orbits_s3r}.
We have also seen that the degree of orbit $A$ and $C$ is 1 and the degree of orbit $B$ and $D$ is 3.
According to Subsection~\ref{sec:sym_op_domain}, we form the set of local rules $\{f\}$ on $\Sigma^3$
by considering the restriction of $f$ on the partitions $A, \ldots, D$, denoted by $f_A, \ldots, f_D$.
Since the degree of $A$ is 1, we need to determine the value of $f_A$ for a given word $w$,
for instance $000$.
The relation $(12)f_A(000)=f_A((12)000) = f_A(000)$ implies $f_A(000) = 0$.
The other values follow from $f_A(\alpha 000) = \alpha f_A(000) = \alpha 0$, $\alpha \in S_3R$.
Similarly, $(02)rf_C(012) = f_C((02)r 012) = f_C(012)$ implies $f_C(012) = 1$, the other values
follow as above.
Thus there is only one function $f_A$ on $A$ and only one function $f_C$ on $C$ invariant under $S_3R$:
\small
\[
f_A =
\begin{pmatrix}
000 & 111 & 222 \\
0 & 1 & 2
\end{pmatrix},  \ \
f_C =
\begin{pmatrix}
012 & 210 & 201 & 102 & 120 & 021 \\
 1  &  1  &  0  &  0  &  2  &  2
\end{pmatrix}.
\]
\normalsize
The orbit $B$ is of degree 3, so we can pick a word in $B$, say $010$, define $f_B(010)=a$ with
$a \in \Sigma$ arbitrary, and derive the other function values as above.
We proceed with orbit $D$ in the same way and set $f_D(001) = b$, $b \in \Sigma$, so
\small
\[
f_B =
\begin{pmatrix}
010 & 101   & 121    & 212   & 020    & 202 \\
 a  & (01)a & (012)a & (02)a & (12)a  & (021)a
\end{pmatrix}, \ \ \mbox{and}
\]
\normalsize
\setcounter{MaxMatrixCols}{12}
\small
\[
f_D =
\begin{pmatrix}
001 & 100 & 011   & 110   & 112    & 211    & 002   & 200   & 022    & 220    & 122   & 221   \\
 b  &  b  & (01)b & (01)b & (012)b & (012)b & (12)b & (12)b & (021)b & (021)b & (02)b & (02)b
\end{pmatrix}.
\]
\normalsize
We set $f(w) = f_X(w)$ if $w \in X$.
Since $a$ and $b$ were arbitrary symbols of $\Sigma$, there are 9 orbits of length 1 of type
$S_3R$, in accordance with the entry $|\orb(3,3,S_3R)|$ in Table~\ref{tab:s3n2n3}.

\section{Validation}
\label{validation}

In the previous sections the exact numbers of orbits for one-dimensional
two-state and three-state CAs were calculated.
This section takes another approach and describes an algorithmic brute-force approach to determine
these numbers for small $k$ and $n$.
The algorithm is implemented in Python 3 and depicted in Table~\ref{py_pgm}.

\begin{table}
\caption{Python 3 program that calculates the number of orbits.}
\label{py_pgm}
\begin{scriptsize}
\begin{verbatim}
import itertools as it
k = 3; n = 2                    # number of states; neighbourhood size
s = tuple(range(k))             # state set (0,1,..,k-1)
def enc(w) :                    # encodes a word
    v = 0
    for a in w : v = k*v+a
    return v                    # returns  w[0]*k^(n-1)+..+w[n-1]
rfl_pairs = [(enc(w),enc(w[::-1])) for w in it.product(s, repeat=n)
    if w != w[::-1]]            # pairs (i,j), w_j = rw_i != w_i
def reflectRule(f) :            # returns reflected rule
    g = list(f)                 # copy f
    for (i,j) in rfl_pairs : g[i] = f[j]
    return tuple(g)
def permutateRule(f, perm) :    # returns permutated rule
    g = [0] * k**n
    for w in it.product(s, repeat=n) :
        g[enc([perm[a] for a in w])] = perm[f[enc(w)]]
    return tuple(g)
def orbit(f) :                  # returns the orbit of f
    orb = set()
    for perm in it.permutations(s) :
        pf = permutateRule(f,perm)
        orb.update({pf, reflectRule(pf)})
    return tuple(orb)
def countOrbits() :             # counts all orbits
    processed = set()           # keep track of processed rules
    count = 0
    for f in it.product(s,repeat=k**n) :
        if f not in processed :
            count += 1
            processed.update(orbit(f))
    return count
print(countOrbits())            # prints number of orbits
\end{verbatim}
\end{scriptsize}
\end{table}

We start with some general considerations applicable to any programming language that
supports arrays (referred to as sequences in Python).
The state set $\Sigma = \{0,1,\ldots,k-1\}$ is ordered, and so is
the set $\Sigma^n$ if we adopt the lexicographical order.
We write the set $\Sigma^n$ as an increasing sequence $\left( w_i \right)$; $0 \leq i < k^n$.
This arrangement allows for the representation of a local rule $f$ by the
sequence $(b_i)$, $0 \leq i < k^n$, where $b_i = f(w_i)$.
We define an encoding function, denoted by $\enc$, which maps a word
to an integer.
The function returns the index $i$ of the word $w$ within the sequence $(w_i)$ such that $w = w_i$,
or equivalently, the
numerical value when $w$ is read as a number in base $k$:
if $w = a_{n-1} \ldots a_0$, then $\enc(w) = a_{n-1} k^{n-1} + \cdots + a_0$.
Given a local rule $f$ represented by the sequence $b = (b_i)$ and a word $w$,
to find $f(w)$ compute $j = \enc(w)$, and then access the $j$-th element in
the sequence $b$: $f(w) = b_j$.

The program listed in Table~\ref{py_pgm} implements local rules and words as sequences.
The symmetry operators are implemented in a manner closely aligned with their theoretical definitions.

We first discuss  the reflection operator.
The variable \texttt{rfl\_pairs} refers to a sequence of integer pairs $(i,j)$ satisfying
the relations $w_j = rw_i$ and $w_i \ne rw_i$
 (\texttt{w[::-1]} is a Python idiom used to reverse a sequence).
The function \texttt{reflectRule} takes a sequence \texttt{f} that represents a rule, and returns
its reflected version \texttt{g}.
Initially, the rule provided as argument is copied into the variable \texttt{g}.
Then, a \texttt{for} loop iterates over \texttt{refl\_pairs}, modifying
the values of \texttt{g} accordingly to the pairs of \texttt{reflectRule}.

We now shift focus to the implementation of the permutation operator.
Permutations of the state set are represented as sequences of length \texttt{k}.
The function \texttt{permutateRule} accepts a local rule \texttt{f} and a permutation
\texttt{perm}, and returns
the permutated rule \texttt{g}.
It begins with initializing the variable \texttt{g} with a sequence of length $k^n$.
A \texttt{for} loop then iterates all words in the domain, and
for each word \texttt{w}, \texttt{g} is changed, according to the
equation $g(\sigma(a_0\ldots a_{n-1})) = g\left( (\sigma a_0)\ldots (\sigma a_{n-1}) \right) = \sigma f(a_0\ldots a_n)$.

The function \texttt{orbit} determines the orbit of the input function \texttt{f}.
It iterates all permutations of $\Sigma$, and adds the permutated rule and
the permutated and reflected rule to the orbit.
If $f$ is invariant under a certain operation then $f$ will not be changed.
Since the underlying data structure of the equivalence class is a \texttt{set},
subsequent additions of the same element have no effect.

Lastly \texttt{countOrbits} iterates the set of local rules:
\texttt{it.product(s,repeat=k**n)} creates the cartesian product of $\Sigma$ with itself, $k^n$ times,
representing the set of local rules.
If a rule belongs to an orbit of an already processed rule, the body of the
\texttt{for} loop is skipped.
Otherwise a new orbit is created, the counter is incremented, and the members of the class are
stored in a set referenced by the variable \texttt{processed}.

On a typical PC, the program prints the result within a few seconds for the
input parameters $k=2$ and $n\leq 4$ as well as for
$k=3$ and $n \leq 2$.
With an optimized implementation and improved hardware,
it might be possible to achieve results for a few additional combinations,
such as $k=2$, and $n=5$.
However, the algorithm's runtime complexity prevents
calculations for larger input parameters.

The presented implementation is minimalistic.
We briefly explore two kinds of improvements.
\begin{enumerate}
\item
By incorporating minor changes, more detailed insights about the orbits can be
obtained.
As an example, adding a hashtable to the function body of \texttt{countEquiClasses}
allows tracking the number of equivalence classes based on their size.

\item Although traversing the entire domain might be unfeasible,
exploring only parts of it might still be instructive.
For instance, if the main loop is adjusted to iterate only the set $\fix(\langle (012) \rangle)$,
it is feasible to obtain the results for the upper lattice of $S_3R$,
which consists of groups that encompass the group
$\langle (012) \rangle$.

\end{enumerate}

\section{Summary}
\label{sec:summary}

This work investigates the classification of one-dimensional cellular automata (CAs) into
orbits (also called equivalence classes) using a group-theoretical approach.
A cellular automaton operates on a bi-infinite lattice of cells, each existing in one of a finite number of states, and evolves according to local rules that depend on a fixed-size neighbourhood of cells.

The study defines equivalence through transformations such as reflection, permutation of states, and their combinations.
The key contributions include:

\begin{itemize}
    \item Formalizing orbits by systematically incorporating symmetry operations, including reflection
and state permutations, to identify and group equivalent rules within the set of local rules;

    \item Deriving orbits for two-state and three-state cellular automata with arbitrary neighbourhood,
which generalizes previous results and corroborates existing findings, such as the well-established 88 equivalence
classes for elementary cellular automata (two states, three neighbours);

    \item Exploring group actions and symmetries acting on the set of local rules,
developing a comprehensive methodological framework for calculating orbits of
the set of local rules across varying numbers
of states and neighbourhood sizes;

	\item Classifying orbits by their isomorphism type with respect to the symmetry operations
and giving results for the number of orbits per type;

    \item Implementing an algorithmic validation through a brute-force approach in Python,
empirically verifying the theoretical results for families of CAs with a small
set of local rules.
\end{itemize}

The study concludes by highlighting the significance of symmetry-based classification in substantially reducing the number of unique CA rules.
This approach provides a rigorous foundation for future investigations into cellular automata dynamics and computational properties,
potentially opening new avenues for understanding discrete complex systems and computational mechanisms.

\subsection*{Acknowledgments}

The authors would like to express their sincere thanks to the anonymous referees for their valuable comments and suggestions.

This research was funded in whole or in part by the Austrian Science Fund (FWF) [Grant DOI:10.55776/PIN5424624].
The authors acknowledge TU Wien Bibliothek for financial support through its Open Access Funding Programme.


%

\end{document}